\tikzstyle{internal} = [draw, fill, shape=circle]
\tikzstyle{external} = [shape=circle]
\tikzstyle{square}   = [draw, fill, rectangle]
\tikzstyle{triangle} = [draw, fill, regular polygon, regular polygon sides=3, inner sep=3pt]
\tikzstyle{pentagon} = [draw, fill, regular polygon, regular polygon sides=5, inner sep=2pt, minimum size=14pt]
\tikzset{every fit/.append style=text badly centered}
\tikzset{>=latex} 
\newcommand{\numP}{\#{\bf P}}
\newcommand{\NP}{{\bf NP}}
\newcommand{\vbl}{{\sf var}}
\newcommand{\arc}{{\sf arc}}
\newcommand{\prs}{partial rejection sampling} 
\newcommand{\aux}{{\sf Aux}}
\newcommand{\Ex}{\mathop{\mathbb{{}E}}\nolimits}
\def\*#1{\mathbf{#1}}
\def\+#1{\mathcal{#1}}
\def\-#1{\mathrm{#1}}
\newcommand{\abs}[1]{\left\vert#1\right\vert}
\newcommand{\ceil}[1]{\lceil#1\rceil}
\newcommand{\floor}[1]{\lfloor#1\rfloor}
\newcommand{\set}[1]{\left\{#1\right\}}
\newcommand{\eps}{\varepsilon}
\renewcommand{\Pr}{\mathop{\mathrm{Pr}}\nolimits}
\newcommand{\Zrel}{Z_{\textnormal{rel}}}
\newcommand{\reverse}[1]{\ensuremath{\overline{#1}}}
\newcommand{\sfix}{\ensuremath{S_{\fix}}}
\newcommand{\sigmafix}{\ensuremath{\sigma_{\fix}}}
\newcommand{\wt}{\mathrm{wt}}
\newcommand{\fix}{\mathrm{fix}}
\newcommand{\maxx}{\mathrm{max}}
\newcommand{\minn}{\mathrm{min}}
\newtheorem{theorem}{Theorem}
\newtheorem{lemma}[theorem]{Lemma}
\newtheorem{claim}[theorem]{Claim}
\newtheorem{proposition}[theorem]{Proposition}
\newtheorem{corollary}[theorem]{Corollary}
\newtheorem{condition}[theorem]{Condition}
\newtheorem{definition}[theorem]{Definition}
\newtheorem*{remark}{Remark}
\crefname{theorem}{Theorem}{Theorems}
\crefname{observation}{Observation}{Observations}
\crefname{claim}{Claim}{Claims}
\crefname{condition}{Condition}{Conditions}
\crefname{algorithm}{Algorithm}{Algorithms}
\crefname{property}{Property}{Properties}
\crefname{example}{Example}{Examples}
\crefname{fact}{Fact}{Facts}
\crefname{lemma}{Lemma}{Lemmas}
\crefname{corollary}{Corollary}{Corollaries}
\crefname{definition}{Definition}{Definitions}
\crefname{remark}{Remark}{Remarks}
\crefname{proposition}{Proposition}{Propositions}
\crefname{equation}{equation}{equations}
\def\prob#1#2#3{\goodbreak\begin{list}{}{\labelwidth\z@ \itemindent-\leftmargin
                        \itemsep\z@  \topsep6\p@\@plus6\p@
                        \let\makelabel\descriptionlabel}
                      \item[\textbf{Name}]#1
                      \item[\textbf{Instance}]#2
                      \item[\textbf{Output}]#3
                \end{list}}
\title{Tight bounds for popping algorithms}
\author[H.\ Guo]{Heng Guo}
\address[Heng Guo]{School of Informatics, University of Edinburgh, Informatics Forum, Edinburgh, EH8 9AB, United Kingdom.}
\email{hguo@inf.ed.ac.uk}
\author[K.\ He]{Kun He}
\address[Kun He]{Institute of Computing Technology, Chinese Academy of Sciences, No.6 Kexueyuan South Road, 100190, Beijing, China}
\email{hekun@ict.ac.cn}
\begin{document}

\begin{abstract}
  We sharpen run-time analysis for algorithms under the partial rejection sampling framework.
  Our method yields improved bounds for
  \begin{itemize}
    \item the cluster-popping algorithm for approximating all-terminal network reliability;
    \item the cycle-popping algorithm for sampling rooted spanning trees;
    \item the sink-popping algorithm for sampling sink-free orientations.
  \end{itemize}
  In all three applications, our bounds are not only tight in order, but also optimal in constants.
\end{abstract}

\maketitle

\section{Introduction}

The counting complexity class \numP\ was introduced by Valiant \cite{Val79b} to capture the apparent intractability of the permanent.
Although exactly evaluating \numP-complete problems is a task even harder than solving \NP-complete problems~\cite{Tod91},
efficient approximation algorithms may still exist.
This possibility was first exhibited through the relationship between approximate counting and sampling \cite{JVV86},
and in particular, using the Markov chain Monte Carlo (MCMC) method \cite{Jer03}.
Early successes along this line include efficient algorithms to approximate the volume of a convex body \cite{DFK91},
to approximate the partition function of ferromagnetic Ising models \cite{JS93},
and to approximate the permanent of a non-negative matrix \cite{JSV04}.
More recently, a number of exciting new approaches were developed,
utilising a variety of tools such as correlation decay \cite{Wei06,BG08}, zeros of graph polynomials \cite{Bar16,PR17}, and Lov\'asz local lemma \cite{Moi17}.

Partial rejection sampling \cite{GJL17} is yet another alternative approach to approximate counting and exact sampling.
It takes an algorithmic Lov\'asz local lemma \cite{MT10} perspective for Wilson's cycle-popping algorithm \cite{PW98} to sample rooted spanning trees.
The algorithm is extremely simple.
In order to sample from a product distribution conditional on avoiding a number of undesirable ``bad'' events,
we randomly initialise all variables,
and re-sample a subset of variables selected by a certain subroutine, until no bad event is present.
In the so-called extremal instances, the resampled variables are just those involved in occurring bad events.
Despite its simplicity, it can be applied to a number of situations that are seemingly unrelated to the local lemma.
Unlike Markov chains, partial rejection sampling yields exact samplers.
The most notable application is the first polynomial-time approximation algorithm for all-terminal network reliability \cite{GJ18a}.

In this paper we sharpen the run-time analysis for a number of algorithms under the partial rejection sampling framework for extremal instances.
We apply our method to analyse cluster-, cycle-, and sink-popping algorithms.
Denote by $n$ the number of vertices in a graph,
and $m$ the number of edges (or arcs) in a undirected (or directed) graph.
We summarise some background and our results below.
\begin{itemize}
  \item
    Cluster-popping is first proposed by Gorodezky and Pak \cite{GP14} to approximate a network reliability measure, called reachability, in directed graphs.
    They conjecture that the algorithm runs in polynomial-time in expectation on bi-directed graphs,
    which is confirmed by Guo and Jerrum \cite{GJ18a}.
    It has also been shown in \cite{GJ18a} that a polynomial-time approximation algorithm for all-terminal network reliability can be obtained using cluster-popping.

    We show a $\frac{p_\maxx}{1-p_\maxx}mn$ upper bound for the expected number of resampled variables on bi-directed graphs,
    where $p_\maxx$ is the maximum failure probability on edges.
    This improves the previous $\frac{p_\maxx}{1-p_\maxx}m^2n$ bound \cite{GJ18a}.
    We also provide an efficient implementation based on Tarjan's algorithm \cite{Tar72} to obtain a faster algorithm to sample spanning connected subgraphs in $O(mn)$ time  (assuming that $p_{\maxx}$ is a constant).
    Furthermore, we obtain a faster approximation algorithm for the all-terminal network reliability,
    which takes $O(mn^2\log n)$ time, improving upon $O(m^2n^3)$ from \cite{GJ18a}.
    For precise statements, see \Cref{thm:cluster-time} and \Cref{thm:reliability-faster}.
  \item
    Cycle-popping is introduced by Propp and Wilson \cite{Wilson96,PW98} to sample uniformly rooted spanning trees, a problem with a long line of history.
    We obtain a $2mn$ upper bound for the expected number of resampled variables,
    improving the constant from the previous $O(mn)$ upper bound \cite{PW98}.
    See \Cref{thm:cycle-time}.
  \item
    Sampling sink-free orientations is introduced by Bubley and Dyer \cite{BD97a}
    to show that the number of solutions to a special class of CNF formulas is \numP-hard to count exactly,
    but can be counted in polynomial-time approximately.
    Bubley and Dyer showed that the natural Markov chain takes $O(m^3\log 1/\eps)$ time
    to generate a distribution $\eps$-close (in total variation distance) to the uniform distribution over all sink-free orientations.
    Using the coupling from the past technique, Huber \cite{Hub98} obtained a $O(m^4)$ exact sampler.
    Cohn, Pemantle, and Propp \cite{CPP02} introduced an alternative exact sampler, called sink-popping.
    They show that sink-popping resamples $O(mn)$ random variables in expectation.

    We improve the expected number of resampled variables for sink-popping to at most $n(n-1)$.
    See \Cref{thm:sink-time}.
\end{itemize}
In all three applications, we also construct examples to show that none of the constants (if explicitly stated) can be further improved.
Our results yield best known running time for all problems mentioned above except sampling uniform spanning trees,
for which the current best algorithm by Schild \cite{Sch18} is in almost-linear time in $m$.
We refer interested readers to the references therein for the vast literature on this problem.
One should note that the corresponding counting problem for spanning trees is tractable via Kirchhoff's matrix-tree theorem,
whereas for the other two exact counting is \numP-hard \cite{Jer81,PB83,BD97a}.
It implies that their solution spaces are less structured, and renders sampling much more difficult than for spanning trees.

The starting point of our method is an exact formula \cite[Theorem 13]{GJL17} for the expected number of events resampled,
for partial rejection sampling algorithm on extremal instances.
Informally, it states that the expected number of resampled events equals to
the ratio between the probability (under the product distribution) that exactly one bad event happens, and the probability that none of the bad events happens.
This characterisation has played an important role in the confirmation of the conjecture by Gorodezky and Pak \cite{GP14},
which leads to the aforementioned all-terminal network reliability algorithm \cite{GJ18a}.

When bad events involve only constant number of variables,
bounding this ratio is sufficient to obtain tight run-time bound.
However, in all three popping algorithms mentioned above, bad events can involve as many variables as $m$ or $n$,
and simply applying a worst case bound (such as $m$ or $n$) yields loose run-time upper bound.
Our improvement comes from a refined expression of the number of variables, rather than events, resampled in expectation.
(See \eqref{eqn:expected} and \Cref{thm:run-time}.)
We then apply a combinatorial encoding idea, and design injective mappings to bound the refined expression.
Similar mappings have been obtained before in \cite{GJL17,GJ18a},
but our new mappings are more complicated and carefully designed in order to achieve tight bounds.
We note that our analysis is completely different and significantly simpler than the original analysis for cycle-popping \cite{PW98} and sink-popping \cite{CPP02}.

Since our bounds are tight, one has to go beyond partial rejection sampling to further accelerate cluster- and sink-popping.
It remains an interesting open question whether $O(mn)$ is a barrier to uniformly sample spanning connected subgraphs.

\section{Partial rejection sampling}\label{sec:PRS}

Let $\{X_1,\dots,X_n\}$ be a set of mutually independent random variables.
Each $X_i$ can have its own distribution and range.
Let $\{A_1,\dots,A_m\}$ be a set of ``bad'' events that depend on $X_i$'s.
For example, for a constraint satisfaction problem (CSP) with variables $X_i$ ($i\in[n]$) and constraints $C_j$ ($j\in[m]$),
each $A_j$ is the set of unsatisfying assignments of $C_j$ for $j\in[m]$.
Let $\vbl(A_j)$ be the set of variables that $A_j$ depends on.

Our goal is to sample from the product distribution of $(X_i)_{i\in[n]}$,
conditional on none of the bad events $(A_j)_{j\in[m]}$ occurring.
Denote by $\mu(\cdot)$ the product distribution and by $\pi(\cdot)$ the conditional one (which is our target distribution).

A breakthrough result in algorithmic Lov\'asz local lemma is the Moser and Tardos algorithm \cite{MT10},
which simply iteratively eliminates occurring bad events.
The form we will use is described in \Cref{alg:prs}.

\begin{algorithm}
  \caption{Partial rejection sampling for extremal instances}
  \label{alg:prs}
  Draw independent samples of all variables $X_1,\dots,X_n$ from their respective distributions\;
  \While{at least one bad event occurs}
  {Find the set $I$ of all occurring $A_i$\;
  Independently resample all variables in $\bigcup_{i\in I}\vbl(A_i)$\;
  }
  \KwRet{the final assignment}
\end{algorithm}

The resampling table is a very useful concept of analysing \Cref{alg:prs}, introduced by \cite{MT10},
and similar ideas were also used in the analysis of cycle-popping \cite{PW98} and sink-popping \cite{CPP02}. 
Instead of drawing random samples, 
we associate an infinite stack to each random variable.
Construct an infinite table such that each row represents random variables, 
and each entry is a sample of the random variable.
The execution of \Cref{alg:prs} can be thought of (but not really implemented this way) as first drawing the whole resampling table,
and whenever we need to sample a variable, we simply use the next value of the table.
Once the resampling table is fixed, \Cref{alg:prs} becomes completely deterministic.

In general \Cref{alg:prs} does not necessarily produce the desired distribution $\pi(\cdot)$.
However, it turns out that it does for extremal instances (in the sense of Shearer \cite{Shearer85}).

\begin{condition}  \label{cond:extremal}
  We say a collection of bad events $(A_i)_{i\in[m]}$ are \emph{extremal},
  if for any $i\neq j$, either $\vbl(A_i)\cap\vbl(A_j)=\emptyset$ or $\Pr_{\mu}(A_i\wedge A_j)=0$.
\end{condition}

In other words, if the collection is extremal,
then any two bad events are either disjoint or depend on distinct sets of variables (and therefore independent).
It was shown \cite[Theorem 8]{GJL17} that if \Cref{cond:extremal} holds,
then \Cref{alg:prs} indeed draws from $\pi(\cdot)$ (conditioned on halting).
In particular, \Cref{cond:extremal} guarantees that the set of occurring bad events have disjoint sets of variables.

\Cref{cond:extremal} may seem rather restricted,
but it turns out that many natural problems have an extremal formulation.
Examples include the cycle-popping algorithm for uniform spanning trees \cite{Wilson96,PW98},
the sink-popping algorithm for sink-free orientations \cite{CPP02},
and the cluster-popping algorithm for root-connected subgraphs \cite{GP14,GJ18a}.
In particular,
the cluster-popping algorithm yields the first polynomial-time approximation algorithm for the all-terminal network reliability problem.
More recently, another application is found to approximately count the number of bases in a bicircular matroid \cite{GJ18c},
which is known to be \numP-hard to count exactly \cite{GN06}.
We refer interested readers to \cite{GJL17,GJ18b} for further applications of the \prs\ framework beyond extremal instances.

A particularly nice feature of \prs\ algorithms for extremal instances is that we have an exact formula \cite[Theorem 13]{GJL17} for the expected number of events resampled.
This formula makes the analysis of these algorithms much more tractable.
However, it counts the number of resampled events.
In the most interesting applications,
bad events typically involve more than constantly many variables.
Using the worst case bound of the number of variables involved in bad events yields loose bounds.

We give a sharper formula for the expected number of \emph{variables} resampled next.
Let $T_i$ be the number of resamplings of event $A_i$.
Let $q_i$ be the probability such that exactly $A_i$ occurs,
and $q_{\emptyset}$ be the probability such that none of $(A_i)_{i\in [m]}$ occurs.
Suppose $q_{\emptyset}>0$ as otherwise the support of $\pi(\cdot)$ is empty.
For extremal instances, \cite[Lemma 12]{GJL17} and the first part of the proof of \cite[Theorem 13]{GJL17} yield
\begin{align}\label{eqn:expected-i}
  \Ex T_i=\frac{q_i}{q_{\emptyset}}.
\end{align}
The proof of \eqref{eqn:expected-i} is via manipulating the moment-generating function of $T_i$.
Let $T$ be the number of resampled variables.\footnote{This notation is different from that in \cite{GJL17}.}
By linearity of expectation and \eqref{eqn:expected-i},
\begin{align}\label{eqn:expected}
  \Ex T=\sum_{i=1}^m \frac{q_i\cdot\abs{\vbl(A_i)}}{q_{\emptyset}}.
\end{align}
We note that an upper bound similar to the right hand side of \eqref{eqn:expected} was first shown by Kolipaka and Szegedy \cite{KS11},
in a much more general setting but counting the number of resampled events.
We will use \eqref{eqn:expected} to derive both upper and lower bounds.

In order to upper bound the ratio in \eqref{eqn:expected},
we will use a combinatorial encoding idea, namely to design an injective mapping.
For an assignment $\sigma$, let $\wt(\sigma)$ be its weight so that $\wt(\sigma)\propto \Pr_{\mu}(\sigma)$.
Let $\Omega_{A_i}$ be the set of assignments so that exactly $A_i$ occurs,
and $\Omega_1:=\bigcup_{i=1}^m\Omega_{A_i}$.
Note that $(\Omega_{A_i})_{i\in m}$ are mutually exclusive and are in fact a partition of $\Omega_1$.
Also, let
\begin{align*}
  \Omega_1^{\vbl}:=\{(\sigma,X)\mid \exists i,\; \sigma\in\Omega_{A_i}\text{ and }X\in\vbl(A_i)\}.
\end{align*}
Moreover, let $\Omega_0$ be the set of ``perfect'' assignments such that none of $(A_i)_{i\in [m]}$ occurs.

\begin{definition}\label{def:injective-mapping}
  For a constant $r>0$ and an auxiliary set $\aux$,
  a mapping $\tau:\Omega_1^{\vbl}\rightarrow\Omega_0\times \aux$ is \emph{$r$-preserving} if
  for any $i$, any $\sigma\in \Omega_{A_i}$ and $X\in\vbl(A_i)$,
  \begin{align*}
    \wt(\sigma)\le r\cdot \wt(\tau'(\sigma)),
  \end{align*}
  where $\tau'$ is the restriction of $\tau$ on the first coordinate.
\end{definition}

A straightforward consequence of \eqref{eqn:expected} is the following theorem,
which will be our main technical tool.

\begin{theorem}\label{thm:run-time}
  If there exists a $r$-preserving injective mapping $\tau:\Omega_1^{\vbl}\rightarrow\Omega_0\times \aux$ for some auxiliary set $\aux$,
  then the expected number of resampled variables of \Cref{alg:prs} is at most $r\abs{\aux}$.
\end{theorem}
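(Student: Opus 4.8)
The plan is to derive the bound directly from the exact formula \eqref{eqn:expected} by interpreting the sum as a weighted count over the set $\Omega_1^{\vbl}$, and then bounding that count via the injective mapping $\tau$. First I would rewrite \eqref{eqn:expected} in terms of weights rather than probabilities. Since $\wt(\sigma)\propto\Pr_\mu(\sigma)$ with a common normalising constant, we have $q_i = \sum_{\sigma\in\Omega_{A_i}}\wt(\sigma)/W$ and $q_\emptyset = \sum_{\sigma\in\Omega_0}\wt(\sigma)/W$ for the same $W=\sum_\sigma\wt(\sigma)$, so the normalisation cancels in the ratio and
\begin{align*}
  \Ex T=\frac{\sum_{i=1}^m \abs{\vbl(A_i)}\sum_{\sigma\in\Omega_{A_i}}\wt(\sigma)}{\sum_{\sigma\in\Omega_0}\wt(\sigma)}.
\end{align*}

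Next I would observe that the numerator is exactly $\sum_{(\sigma,X)\in\Omega_1^{\vbl}}\wt(\sigma)$: for each $i$ and each $\sigma\in\Omega_{A_i}$, the variable $X$ ranges over the $\abs{\vbl(A_i)}$ elements of $\vbl(A_i)$, contributing $\wt(\sigma)$ each time, and because $(\Omega_{A_i})_i$ partition $\Omega_1$ there is no double counting across different $i$ — each $\sigma\in\Omega_1$ lies in a unique $\Omega_{A_i}$. Then I would apply the $r$-preserving injective mapping $\tau:\Omega_1^{\vbl}\to\Omega_0\times\aux$. Using that $\tau'(\sigma)\in\Omega_0$ and $\wt(\sigma)\le r\cdot\wt(\tau'(\sigma))$ for every $(\sigma,X)\in\Omega_1^{\vbl}$, and that $\tau$ is injective, the pairs $(\tau'(\sigma),a)=\tau(\sigma,X)$ are distinct elements of $\Omega_0\times\aux$, so
\begin{align*}
  \sum_{(\sigma,X)\in\Omega_1^{\vbl}}\wt(\sigma)
    \le r\sum_{(\sigma,X)\in\Omega_1^{\vbl}}\wt(\tau'(\sigma))
    = r\sum_{(\sigma,X)\in\Omega_1^{\vbl}}\wt\big(\tau(\sigma,X)\text{'s first coord}\big)
    \le r\sum_{\rho\in\Omega_0}\sum_{a\in\aux}\wt(\rho)
    = r\abs{\aux}\sum_{\rho\in\Omega_0}\wt(\rho),
\end{align*}
where the last inequality is because injectivity means each $(\rho,a)$ is hit at most once, so the sum over the image is bounded by the sum over all of $\Omega_0\times\aux$. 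Dividing by $\sum_{\rho\in\Omega_0}\wt(\rho)=q_\emptyset W>0$ (which is positive by the standing assumption $q_\emptyset>0$) gives $\Ex T\le r\abs{\aux}$.

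The argument is essentially bookkeeping once the right reformulation is in place, so there is no serious obstacle; the only point needing a little care is the handling of the auxiliary coordinate — one must use injectivity of the full map $\tau$ (not just of $\tau'$, which need not be injective) to avoid overcounting, which is why the bound picks up the factor $\abs{\aux}$ rather than just $r$. I would make sure to state explicitly that the normalising constant cancels and that $q_\emptyset>0$ is what makes the division legitimate.
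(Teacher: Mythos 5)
Your proposal is correct and follows essentially the same route as the paper's proof: rewrite the ratio in \eqref{eqn:expected} in terms of weights, identify the numerator with $\sum_{(\sigma,X)\in\Omega_1^{\vbl}}\wt(\sigma)$, and use the $r$-preserving property together with injectivity of $\tau$ to bound it by $r\abs{\aux}$ times the weight of $\Omega_0$. The paper states this in three lines; you have simply filled in the bookkeeping, including the correct observation that injectivity of the full map $\tau$ (not merely $\tau'$) is what prevents overcounting.
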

\begin{proof}
  Note that $q_i \propto \sum_{\sigma\in \Omega_{A_i}}\wt(\sigma)$
  and $q_{\emptyset} \propto \sum_{\sigma\in \Omega_{0}}\wt(\sigma)$.
  As $\tau$ is $r$-preserving and injective,
  \begin{align*}
    \sum_{i=1}^m \frac{q_i\cdot\abs{\vbl(A_i)}}{q_{\emptyset}} \le r\abs{\aux}.
  \end{align*}
  The theorem then follows from \eqref{eqn:expected}.
\end{proof}

\section{Cluster-popping}

Let $G=(V,A)$ be a directed graph with root $r$.
The graph $G$ is called \emph{root-connected} if there is a directed path in $G$ from every non-root vertex to~$r$.
Let $0<p_a<1$ be the failure probability for arc $a$,
and define the weight of a subgraph $S\subseteq A$ to be $\wt(S):=\prod_{a\in S}(1-p_a)\prod_{a\not\in S}p_a$.
Thus the target distribution $\pi(\cdot)$ is over all root-connected subgraphs,
and $\pi(S)\propto\wt(S)$.

The extremal formulation by Gorodezky and Pak \cite{GP14} is the following.
Each arc $e$ is associated with a (distinct) random variable which is present with probability $1-p_a$.
A cluster is a subset of vertices not containing $r$ so that no arc is going out.
We want all vertices to be able to reach $r$.
Thus clusters are undesirable.
However, \cref{cond:extremal} is not satisfied if we simply let the bad events be clusters.
Instead, we choose minimal clusters to be bad events.

More precisely, for each $C\subseteq V$, we associate it with a bad event $B_C$ to indicate that $C$ is a cluster
and no proper subset $C'\subsetneq C$ is one.
Each $B_C$ depends on arcs going out of $C$ for being a cluster,
as well as arcs inside $C$ for being minimal.
In other words, $\vbl(B_C)=\{u\rightarrow v\mid u\in C\}$.
There are clearly exponentially many bad events.
A description of the algorithm is given in \Cref{alg:cluster-popping}.

\begin{algorithm}
  \caption{Cluster-popping}
  \label{alg:cluster-popping}
  Let $S$ be a subset of arcs by choosing each arc $a$ with probability $1-p_a$ independently\;
  \While{there is a cluster in $(V,S)$}{
    Let $C_1,\dots,C_k$ be all minimal clusters in $(V,S)$\;
    Re-randomize all arcs in $\bigcup_{i=1}^k\vbl(B_{C_i})$ to get a new $S$\;
  }
  \KwRet{$S$}
\end{algorithm}

\Cref{cond:extremal} is met, due to the following observation \cite[Claim 3]{GJ18a}.

\begin{claim}\label{clm:cluster:sc}
  Any minimal cluster is strongly connected.
\end{claim}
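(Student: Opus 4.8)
The plan is to show that a minimal cluster $C$ cannot be disconnected into pieces that fail to ``see'' each other within $C$. Recall that $C$ being a cluster in $(V,S)$ means $r\notin C$ and no arc of $S$ leaves $C$, i.e.\ every $S$-arc with tail in $C$ also has head in $C$; minimality means no proper nonempty subset $C'\subsetneq C$ has this same property. Strong connectivity of $C$ (with respect to the arc set $S$ restricted to $C$) is what we want to deduce from minimality.

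First I would argue by contradiction: suppose $C$ is a minimal cluster but the induced subgraph $(C, S\cap (C\times C))$ is not strongly connected. Then pick any vertex $v\in C$ and let $C'$ be the set of vertices in $C$ reachable from $v$ using only $S$-arcs that stay inside $C$ (equivalently, using $S$-arcs with both endpoints in $C$; since $C$ is a cluster, an $S$-arc out of a vertex of $C$ automatically has its head in $C$, so ``staying inside $C$'' is no real restriction on arcs leaving vertices already placed in $C'$). The key claim is then that $C'$ is itself a cluster. Indeed $r\notin C'$ since $C'\subseteq C$, and any $S$-arc $u\to w$ with $u\in C'$ has $w\in C$ (cluster property of $C$), hence $w$ is reachable from $v$ inside $C$ via the path to $u$ followed by this arc, so $w\in C'$; thus no $S$-arc leaves $C'$. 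So $C'$ is a cluster contained in $C$.

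It remains to check $C'\subsetneq C$, which is where the non-strong-connectivity hypothesis is used: if $C$ is not strongly connected, then in particular there is some vertex of $C$ not reachable from $v$ within $C$ (for a suitable choice of $v$ — strong connectivity fails iff for some ordered pair there is no path, and we take $v$ to be the tail of such a pair), so $C'$ is a proper subset; and $C'$ is nonempty since $v\in C'$. This contradicts minimality of $C$, and hence $C$ must be strongly connected.

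The only point requiring a little care is the quantifier in ``not strongly connected'': strong connectivity of the induced subgraph means every vertex can reach every other vertex using arcs inside $C$; its negation gives an ordered pair $(v,w)$ with no such $v$-to-$w$ path, and it is this $v$ that we feed into the closure construction above, guaranteeing $w\notin C'$ and thus properness. With this observation the argument is complete; I do not expect any other obstacle, since once $C'$ is seen to be a cluster the contradiction with minimality is immediate.
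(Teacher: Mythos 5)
Your argument is correct and is essentially the standard proof (the paper itself defers to \cite[Claim~3]{GJ18a}, whose proof is exactly this: the set of vertices reachable from any $v\in C$ within $C$ is itself a cluster, hence equals $C$ by minimality). Your contrapositive phrasing and the care about the quantifier in ``not strongly connected'' are fine and introduce no gap.
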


Since \Cref{cond:extremal} holds,
\Cref{alg:cluster-popping} draws from the desired distribution over root-connected subgraphs in a directed graph.
It was further shown in \cite{GJ18a} that
the cluster-popping algorithm can be used to sample spanning connected subgraphs in an undirected graph,
and to approximate all-terminal network reliability in expected polynomial time.
We will first give an improved running time bound for the basic cluster-popping algorithm.

\subsection{Expected running time}

A graph $G=(V,A)$ is \emph{bi-directed} if each arc $a\in A$ has an anti-parallel twin in $A$ as well.
For an arc $a=u\rightarrow v$, let $\reverse{a} := v\rightarrow u$ be its reversal.
Let $n=\abs{V}$ and $m=\abs{A}$. 
It was shown that cluster-popping can take exponential time in general  \cite{GP14}
while in bi-directed graphs, the expected number of resampled variables is at most $\frac{p_{\maxx}}{1-p_{\maxx}}m^2n$  \cite{GJ18a},
where $p_{\maxx}=\max_{a\in A} p_a$. We will now design a $\frac{p_{\maxx}}{1-p_{\maxx}}$-preserving injective mapping
from $\Omega_1^{\vbl}$ to $\Omega_0\times V\times A$ for a connected and bi-directed graph $G$,
where $\Omega_0$ is the set of all root-connected subgraphs, and $\Omega_1$ is the set of subgraphs with a unique minimal cluster.
Applying \Cref{thm:run-time} with $\aux=V\times A$, the expected number of resampled variables is $\frac{p_{\maxx}}{1-p_{\maxx}}mn$.
Thus a factor $m$ is shaved.

Our $\frac{p_{\maxx}}{1-p_{\maxx}}$-preserving injective mapping $\varphi:\Omega_1^{\vbl}\rightarrow \Omega_0\times V\times A$ is modified from the one in~\cite{GJ18a}.
The main difference is that the domain is $\Omega_1^{\vbl}$ instead of $\Omega_1$,
and thus we need to be able to recover the random variable in the encoding.
The encoding is also more efficient.
For example, we only record $u$ instead of $u\rightarrow u'$ in~\cite{GJ18a} as explained below.

For completeness and clarity we include full details.
We assume that the bi-directed graph $G$ is connected so that $\Omega_0\neq\emptyset$,
and the root $r$ is arbitrarily chosen.
(Apparently in a bi-directed graph, weak connectivity is equivalent to strong connectivity.)

For each subgraph $S\in\Omega_1$, the rough idea is to ``repair'' $S$ so that no minimal cluster is present.
We fix in advance an arbitrary ordering of vertices and arcs.
Let $C$ be the unique minimal cluster in $S$ and $v\rightarrow v'$ be an arc so that $v\in C$,
namely $v\rightarrow v'\in \vbl(B_C)$.
Let $R$ denote the set of all vertices which can reach the root $r$ in the subgraph $S$.
Since $S\in\Omega_1$, $R\neq V$.
Let $U = V\setminus R$.
Since $G$ is root-connected, there is at least one arc in $A$ from $U$ to $R$.
Let $u \rightarrow u'$ be the first such arc, where $u\in U$ and $u'\in R$.
Let
\begin{align}\label{eqn:cluster-varphi}
  \varphi(S,v\rightarrow v') := (\sfix,u,v\rightarrow v'),
\end{align}
where $\sfix\in\Omega_0$ is defined in the same way as in \cite{GJ18a} and will be presented shortly.
In \cite{GJ18a}, the mapping is from $S$ to $(\sfix,v,u\rightarrow u')$.

Consider the subgraph $H=(U,S[U])$, where
\begin{align*}
  S[U] := \{x\rightarrow y\mid x\in U,\; y\in U,\; x\rightarrow y\in S\}.
\end{align*}
We consider the directed acyclic graph (DAG) of strongly connected components of $H$, and call it $\widehat{H}$.
(We use the decoration $\widehat{\;}$ to denote arcs, vertices, etc.\ in $\widehat{H}$.)
To be more precise, we replace each strongly connected component by a single vertex.
For a vertex $w\in U$, let $[w]$ denote the strongly connected component containing $w$.
For example, $[v]$ is the same as the minimal cluster $C$ by \cref{clm:cluster:sc}.
We may also view $[w]$ as a vertex in $\widehat{H}$ and we do not distinguish the two views.
The arcs in $\widehat{H}$ are naturally induced by $S[U]$.
Namely, for $[x]\neq[y]$, an arc $[x]\rightarrow [y]$ is present in $\widehat{H}$
if there exists $x'\in[x]$, $y'\in[y]$ such that $x'\rightarrow y' \in S$.

We claim that $\widehat{H}$ is root-connected with root $[v]$.
This is because $[v]$ must be the unique sink in $\widehat{H}$ and $\widehat{H}$ is acyclic.
If there is another sink $[w]$ where $v\not\in[w]$, then $[w]$ is a minimal cluster in $H$.
This contradicts to $S\in\Omega_1$.

Since $\widehat{H}$ is root-connected, there is at least one path from $[u]$ to $[v]$.
Let $\widehat{W}$ denote the set of vertices of $\widehat{H}$ that can be reached from $[u]$ in $\widehat{H}$ (including $[u]$),
and $W:= \{x\mid [x]\in \widehat{W}\}$.
Then $W$ is a cluster and $[u]$ is the unique source in $\widehat{H}[\widehat{W}]$.
As $\widehat{H}$ is root-connected, $[v]\in\widehat{W}$.
To define \sfix, we reverse all arcs in $S[W]$ and add the arc $u\rightarrow u'$ to eliminate the cut.
Formally, let
\begin{align*}
  \sfix := (S\setminus S[W]) \cup \{u\rightarrow u'\} \cup \{ y\rightarrow x \mid x\rightarrow y\in S[W]\}.
\end{align*}

Let $\widehat{H}_{\fix}$ be the graph obtained from $\widehat{H}$ by reversing all arcs induced by $S[W]$.
Observe that $[u]$ becomes the unique sink in $\widehat{H}_{\fix}[\widehat{W}]$ (and $[v]$ becomes the unique source).

We verify that $\sfix\in\Omega_0$.
For any $x\in R$, $x$ can still reach $r$ in $(V,\sfix)$ since the path from $x$ to $r$ in $(V,S)$ is not changed.
Since $u\rightarrow u'\in\sfix$, $u$ can reach $u'\in R$ and hence $r$.
For any $y\in W$, $y$ can reach $u$ as $[u]$ is the unique sink in $\widehat{H}_{\fix}[\widehat{W}]$.
For any $z\in U\setminus W$, $z$ can reach $v\in W$ since the path from $z$ to $v$ in $(V,S)$ is not changed.

\begin{lemma}  \label{lem:cluster-injective}
  The mapping $\varphi:\Omega_1^{\vbl}\rightarrow\Omega_0\times V\times A$
  defined in \eqref{eqn:cluster-varphi} is $\frac{p_{\maxx}}{1-p_{\maxx}}$-preserving and injective.
\end{lemma}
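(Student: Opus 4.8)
The plan is to establish the two claimed properties separately. For the $\frac{p_{\maxx}}{1-p_{\maxx}}$-preserving property, I would compare $\wt(S)$ with $\wt(\sfix)$ directly from the definition $\wt(S)=\prod_{a\in S}(1-p_a)\prod_{a\not\in S}p_a$. The assignment $\sfix$ differs from $S$ only in two ways: all arcs of $S[W]$ are replaced by their reversals, and the arc $u\rightarrow u'$ is added. Reversing an arc inside $W$ removes one arc from the subgraph and adds another, so the net effect on the weight is a factor of the form $\frac{(1-p_{y\rightarrow x})p_{x\rightarrow y}}{p_{y\rightarrow x}(1-p_{x\rightarrow y})}$ for each such arc; I would need to argue these factors telescope or are otherwise controlled — but in fact the cleanest route is to observe that reversal is itself essentially weight-neutral up to the contribution of the particular arc, and the only genuinely ``lossy'' step is the insertion of $u\rightarrow u'$, which was absent from $S$ (since $u\in U$, $u'\in R$, and no arc of $S$ crosses from $U$ to $R$). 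Adding $u\rightarrow u'$ multiplies the weight by $\frac{1-p_{u\rightarrow u'}}{p_{u\rightarrow u'}}$, hence $\wt(\sfix)\ge \frac{1-p_{\maxx}}{p_{\maxx}}\wt(S)$ after one checks the reversal steps contribute a factor $\ge 1$ or can be bundled; equivalently $\wt(S)\le \frac{p_{\maxx}}{1-p_{\maxx}}\wt(\sfix)$, which is exactly $r$-preserving with $r=\frac{p_{\maxx}}{1-p_{\maxx}}$. I expect the bookkeeping on the reversed arcs to require a careful pairing argument, and this is the first place where the details matter.

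For injectivity, I would show how to reconstruct $(S, v\rightarrow v')$ from the image $(\sfix, u, v\rightarrow v')$. Since $v\rightarrow v'$ is handed to us directly, it suffices to recover $S$ from $(\sfix,u)$. The idea is to reverse the construction: from $\sfix$ we identify the set $R$ of vertices reaching $r$ — but one must be careful, since $\sfix\in\Omega_0$ so \emph{every} vertex reaches $r$. Instead, I would use $u$ as the anchor: in $\sfix$, the arc $u\rightarrow u'$ is identifiable as the unique arc out of $u$ that was added (here the fixed ordering of arcs, together with the structural role of $u$ as the unique sink of $\widehat{H}_{\fix}[\widehat{W}]$, should let us pin down $u'$ and hence the set $W$). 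Once $W$ and $u\rightarrow u'$ are identified, removing $u\rightarrow u'$ and re-reversing all arcs inside $W$ recovers $S$. The crux is proving that $u'$, and then $W$, are determined by $\sfix$ and $u$ alone; this is where the careful design of the encoding (recording $u$ rather than the full arc $u\rightarrow u'$, as the paper emphasizes) pays off, and I expect this decoding step to be the main obstacle.

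Concretely, the key steps in order are: (1) unpack the weight ratio $\wt(S)/\wt(\sfix)$ in terms of the symmetric difference $S\triangle\sfix$; (2) show the reversal of $S[W]$ changes the weight by a factor at least $1$ (or exactly pair up the arcs so this holds), isolating $u\rightarrow u'$ as the sole source of weight loss; (3) conclude the $r$-preserving bound with $r=\frac{p_{\maxx}}{1-p_{\maxx}}$; (4) describe the decoding map, using $u$ to locate $u\rightarrow u'$ in $\sfix$; (5) reconstruct $\widehat{W}$ (equivalently $W$) from $\sfix$ and $u$ by reversing the role of the reachability/source-sink structure; (6) verify that re-reversing $\sfix[W]$ and deleting $u\rightarrow u'$ returns exactly $S$, and that $v\rightarrow v'$ is a valid element of $\vbl(B_C)$ for the recovered $C=[v]$, so the preimage lies in $\Omega_1^{\vbl}$. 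The delicate point throughout is that $\sfix$ loses the distinction between $R$, $W$, and $U\setminus W$ — all vertices now reach $r$ — so every structural feature we used in the forward direction must be re-derived from the single extra piece of data $u$.
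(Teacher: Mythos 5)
Your treatment of $\frac{p_{\maxx}}{1-p_{\maxx}}$-preservingness is essentially the paper's: reversing the arcs of $S[W]$ is weight-neutral (because each arc and its anti-parallel twin carry the same probability, which is the standing assumption for bi-directed graphs arising from undirected ones — worth stating explicitly rather than hedging about ``telescoping''), and the sole lossy move is the insertion of $u\rightarrow u'$, which costs at most a factor $\frac{p_{\maxx}}{1-p_{\maxx}}$.

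The injectivity half, however, has a genuine gap: you correctly identify that the crux is recovering $u'$ and $W$ from the image, but you do not supply those arguments, and the route you gesture at would not work as stated. (i) Pinning down $u'$ via ``the structural role of $u$ as the unique sink of $\widehat{H}_{\fix}[\widehat{W}]$'' is circular, since $\widehat{W}$ is exactly what the decoder does not yet know. The non-circular observation is that no arc of $S$ leaves $W$ at all, so the \emph{only} arc of $\sfix$ leaving $W$ is $u\rightarrow u'$; hence $u'$ is the first vertex on every simple path from $u$ to $r$ in $\sfix$, and deleting $u\rightarrow u'$ and recomputing reachability to $r$ recovers $R$ and $U$. (ii) You claim it ``suffices to recover $S$ from $(\sfix,u)$,'' discarding $v\rightarrow v'$, and your step (5) proposes to reconstruct $\widehat{W}$ from $\sfix$ and $u$ alone. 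But the natural reachability statistics of $u$ do not isolate $\widehat{W}$: for instance, every component of $U$ (not just those in $\widehat{W}$) can reach $[u]$ in $\widehat{H}_{\fix}$, since components of $U\setminus W$ reach $[v]$ along unchanged arcs and $[v]$ is the source of $\widehat{H}_{\fix}[\widehat{W}]$. The working decoding takes $\widehat{W}$ to be the set of components reachable \emph{from} $[v]$ in $\widehat{H}_{\fix}$ — so the recorded arc $v\rightarrow v'$ is genuinely used, and one must then prove this reachable set equals $\widehat{W}$ (using that $\widehat{W}$ is a cluster in $\widehat{H}$). (iii) Even before that, one must show $\widehat{H}_{\fix}$ is acyclic, so that the strongly connected components of $(U,\sfix[U])$ coincide with those of $(U,S[U])$; without this, the contracted DAG, and hence ``the arcs inside $W$'' that you propose to re-reverse, cannot be read off from $\sfix$. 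None of these three steps appears in your outline, and they constitute the substance of the lemma.
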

\begin{proof}
  It is easy to verify $\frac{p_{\maxx}}{1-p_{\maxx}}$-preservingness,
  since flipping arcs leaves its weight unchanged.
  The only move changing the weight is to add the arc $u\rightarrow u'$ in \sfix,
  which results in at most $\frac{p_{\maxx}}{1-p_{\maxx}}$ change.

  Next we verify that $\varphi$ is injective.
  To do so, we show that we can recover $S$ and $v\rightarrow v'$ given $\sfix$, $u$, and $v\rightarrow v'$.
  Clearly, it suffices to recover just $S$.

  First observe that in $\sfix$, $u'$ is the first vertex on any path from $u$ to $r$.
  Thus we can recover $u\rightarrow u'$.
  Remove it from $\sfix$.
  The set of vertices which can reach $r$ in $(V,\sfix\setminus\{u\rightarrow u'\})$ is exactly $R$ in $(V,S)$.
  Namely we can recover $U$ and $R$.
  As a consequence, we can recover all arcs in $S$ that are incident with~$R$, as these arcs are unchanged.

  What is left to do is to recover arcs in $S[U]$.
  To do so, we need to find out which arcs have been flipped.
  We claim that $\widehat{H}_{\fix}$ is acyclic.
  Suppose there is a cycle in $\widehat{H}_{\fix}$.
  Since $\widehat{H}$ is acyclic, the cycle must involve flipped arcs and thus vertices in $\widehat{W}$.
  Let $[x]\in\widehat{W}$ be the lowest one under the topological ordering of $\widehat{H}[\widehat{W}]$.
  Since $\widehat{W}$ is a cluster,
  the outgoing arc $[x]\rightarrow [y]$ along the cycle in $\widehat{H}_{\fix}$ must have been flipped,
  implying that $[y]\in\widehat{W}$ and $[y]\rightarrow [x]$ is in $\widehat{H}[\widehat{W}]$.
  This contradicts to the minimality of $[x]$.

  Since $\widehat{H}_{\fix}$ is acyclic,
  the strongly connected components of $H_{\fix}:=(U,\sfix[U])$ are identical to those of $H=(U,S[U])$.
  (Note that flipping all arcs in $S[W]$ leaves strongly connected components inside $W$ unchanged.)
  Hence contracting all strongly connected components of $H_{\fix}$ results in exactly $\widehat{H}_{\fix}$.
  All we need to recover now is the set $\widehat{W}$.
  Let $\widehat{W}'$ be the set of vertices reachable from $[v]$ in $\widehat{H}_{\fix}$.
  It is easy to see that $\widehat{W} \subseteq \widehat{W}'$ since we were flipping arcs.
  We claim that actually $\widehat{W} = \widehat{W}'$.
  For any $[x]\in\widehat{W}'$,
  there is a path from $[v]$ to $[x]$ in $\widehat{H}_{\fix}$.
  Suppose $[x]\not\in\widehat{W}$.
  Since $[v]\in\widehat{W}$, we may assume that $[y]$ is the first vertex along the path such that $[y]\rightarrow[z]$ where $[z]\not\in\widehat{W}$.
  Thus $[y]\rightarrow[z]$ has not been flipped and is present in $\widehat{H}$.
  However, this contradicts the fact that $\widehat{W}$ is a cluster in $\widehat{H}$.

  To summarize, given $\sfix$, $u$, and $v\rightarrow v'$,
  we may uniquely recover $S$ and thus $v\rightarrow v'$.
  Hence the mapping $\varphi$ is injective.
\end{proof}

Combining \Cref{lem:cluster-injective} and \Cref{thm:run-time} (with $\aux=V\times A$) implies
an upper bound of the number of random variables drawn in expectation for \Cref{alg:cluster-popping}.

\begin{theorem}\label{thm:cluster-time}
  To sample edge-weighted root-connected subgraphs,
  the expected number of random variables drawn in \Cref{alg:cluster-popping} on a connected bi-directed graph $G=(V,A)$ is
  at most $m+\frac{p_{\maxx}mn}{1-p_{\maxx}}$, where $n=\abs{V}$ and $m=\abs{A}$.
\end{theorem}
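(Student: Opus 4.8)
The plan is to read off the theorem as an essentially immediate consequence of \Cref{lem:cluster-injective} and \Cref{thm:run-time}, together with one small bookkeeping step for the initialisation of \Cref{alg:cluster-popping}. All the combinatorial content has already been done in \Cref{lem:cluster-injective}; what remains is to plug in the right parameters and to count carefully.

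First I would check that the sets appearing in \Cref{lem:cluster-injective} are exactly the ones to which \Cref{thm:run-time} refers. By construction, on a subgraph $S$ the bad event $B_C$ occurs precisely when $C$ is an inclusion-minimal cluster of $(V,S)$; hence the set of bad events occurring on $S$ is $\{B_C \mid C \text{ is a minimal cluster of } (V,S)\}$. Consequently ``exactly one bad event occurs'' is the same statement as ``$(V,S)$ has a unique minimal cluster'', so $\Omega_1$ coincides with the domain used in \Cref{lem:cluster-injective}, and $\Omega_0$ is indeed the set of root-connected subgraphs (those with no cluster at all). Since $G$ is connected, $\Omega_0\neq\emptyset$, so $q_{\emptyset}>0$ and \eqref{eqn:expected} is valid.

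Next I would instantiate \Cref{thm:run-time} with $r=\frac{p_{\maxx}}{1-p_{\maxx}}$ and $\aux=V\times A$; this is legitimate by \Cref{lem:cluster-injective}, which provides an $r$-preserving injective mapping $\varphi:\Omega_1^{\vbl}\to\Omega_0\times V\times A$. Since $\abs{\aux}=nm$, the theorem yields that the expected number of \emph{resampled} variables of \Cref{alg:cluster-popping} is at most $\frac{p_{\maxx}}{1-p_{\maxx}}\,nm$. This is exactly the place where the refined formula \eqref{eqn:expected} pays off: a single bad event $B_C$ can have $\abs{\vbl(B_C)}$ as large as $m$, so naively bounding $\abs{\vbl(B_C)}$ by $m$ would lose a whole factor of $m$; instead the extra arc coordinate in the range of $\varphi$ absorbs the variable $v\to v'\in\vbl(B_C)$, keeping $\abs{\aux}$ at $nm$ rather than $nm^2$.

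Finally I would account for the first line of \Cref{alg:cluster-popping}, which draws all $m$ arc variables once before any resampling takes place. Adding these $m$ initial draws to the expected number of resampled variables gives a total expected number of random variables drawn of at most $m+\frac{p_{\maxx}mn}{1-p_{\maxx}}$, which is the claimed bound (and its finiteness also shows the algorithm halts almost surely). I do not expect any real obstacle here; the only points needing care are the identification of $\Omega_0$ and $\Omega_1$ in the previous paragraph and not forgetting the additive $m$ from initialisation — the substantive work having already been carried out in \Cref{lem:cluster-injective}.
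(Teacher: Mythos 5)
Your proposal is correct and matches the paper's own (one-line) proof: combine \Cref{lem:cluster-injective} with \Cref{thm:run-time} using $\aux=V\times A$ to bound the expected number of resampled variables by $\frac{p_{\maxx}}{1-p_{\maxx}}mn$, and add $m$ for the initial draw. The extra checks you include (identification of $\Omega_0$ and $\Omega_1$, positivity of $q_{\emptyset}$) are sound and merely make explicit what the paper leaves implicit.
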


Another consequence of \Cref{lem:cluster-injective} is that we can bound the expected number of resamplings of each individual variable.
Denote by $T_a$ the number of resamplings of $a$.
Then, by \eqref{eqn:expected-i},
\begin{align*}
  \Ex T_a=\sum_{C:\;a\in\vbl(B_C)} \frac{q_C}{q_{\emptyset}},
\end{align*}
where $q_C$ is the probability that under the product distribution, there is a unique minimal cluster $C$,
and $q_{\emptyset}$ is the probability that under the product distribution, there is no cluster.
Through the $\frac{p_{\maxx}}{1-p_{\maxx}}$-preserving injective mapping $\varphi$, namely \eqref{eqn:cluster-varphi},
if we fix $a$,
it is easy to see that
\begin{align*}
  \sum_{C\in \+C_a} \frac{q_C}{q_{\emptyset}} \le \frac{p_{\maxx}n}{1-p_{\maxx}}.
\end{align*}
As the above holds for any $a$, we have that the expected number of resampling of any variable is at most $\frac{p_{\maxx}n}{1-p_{\maxx}}$.
This is an upper bound for the expected depth of the resampling table.

\subsection{An efficient implementation}

\Cref{thm:cluster-time} bounds the number of random variables drawn.
However, regarding the running time of \Cref{alg:cluster-popping},
in addition to drawing random variables,
a naive implementation of \Cref{alg:cluster-popping} may need to find clusters in every iteration of the while loop,
which may take as much as $O(m)$ time by, for example, Tarjan's algorithm~\cite{Tar72}. 
In \Cref{alg:cluster-popping-Tarjan} we give an efficient implementation, which can be viewed as a dynamic version of Tarjan's algorithm.

\Cref{alg:cluster-popping-Tarjan} is sequential,
and its correctness relies on the fact that the order of resampling events for extremal instances does not affect the final output.
See \cite[Section 4]{GJ18c} for a similar sequential (but efficient) implementation of ``bicycle-popping''.

Two key modifications in \Cref{alg:cluster-popping-Tarjan} comparing to Tarjan's algorithm are:
\begin{enumerate}
  \item in the DFS, once the root $r$ is reached, all vertices along the path are ``set'' and will not be resampled any more;
  \item the first output of Tarjan's algorithm is always a strongly connected component with no outgoing arcs. 
    Such a component (if it does not contain the root $r$) is a minimal cluster and will immediately be resampled in \Cref{alg:cluster-popping-Tarjan}.
\end{enumerate}
Let $G=(V,A)$ be the input graph.
For $v\in V$, let $\arc(v)=\{v\rightarrow w\mid v\rightarrow w\in A\}$ be the set of arcs going out from $v$.
Recall that for $C\subseteq V$, $\vbl(B_C)=\bigcup_{v\in C}\arc(v)$.

\begin{algorithm}
  \caption{Cluster-popping with Tarjan's algorithm}
  \label{alg:cluster-popping-Tarjan}
  \Input{A directed graph $G=(V,A)$ with a special root vertex $r$\;}
  \Output{$R\subseteq A$ drawn according to $\pi(\cdot)$\;}
  Let $R\subseteq A$ be obtained by drawing each arc $a\in A$ with probability $1-p_a$ independently\;
  $r.index\gets 1$, $r.root\gets 1$, $index\gets 2$\;  
  Let $S$ be a stack consisting of only $r$\;
  \While{$\exists v \in V,\;v.index = \textnormal{UNDEFINED}$}{
    Dynamic-DFS$(v)$\;
  }
  \KwRet{$R$}\;
  \Fn{\emph{Dynamic-DFS}$(v)$}{
    $v.index \gets index$\;
    $v.root \gets v.index$\;
    $index \gets index + 1$\;
    $S.push(v)$\;
    \For{\emph{each} $v\rightarrow w\in R$}{
      \uIf{$w.index = \textnormal{UNDEFINED}$}{
      Dynamic-DFS$(w)$\;
      $v.root\gets\min\{v.root, w.root\}$\;
      }
      \Else{
        $v.root\gets\min\{v.root, w.index\}$\;
        }
      }
    \If(\tcp*[f]{A minimal cluster is found}){$v.root = v.index$}{
      \Repeat{$w = v$}{
        $w\gets S.pop()$\;
        $w.index\gets \text{UNDEFINED}$,
        $w.root\gets\text{UNDEFINED}$,
        $index\gets index-1$\;
        $R\gets R\setminus\arc(w)$, draw each arc $a\in \arc(w)$ with probability $1-p_a$ independently, and add them to $R$\tcp*{Resampling step}
      }
      Dynamic-DFS$(v)$\;
    }
  }
\end{algorithm}

We first observe that in \Cref{alg:cluster-popping-Tarjan}, $index-1$ is always the number of variables that have already been indexed.
Furthermore, inside the function ``Dynamic-DFS'' of \Cref{alg:cluster-popping-Tarjan},
for any $v\in V$, if $v.root$ is defined, then $v$ can reach the vertex indexed by $v.root$.
This can be shown by a straightforward induction on the recursion depth,
and observing that resampling in any Dynamic-DFS$(v)$ will not affect arcs whose heads are indexed before $v$.

\begin{lemma}  \label{lem:Tarjan-invariant}
  At the beginning of each iteration of the while loop in \Cref{alg:cluster-popping-Tarjan},
  all vertices whose indices are defined can reach the root $r$, 
  and they will not be resampled any more.
\end{lemma}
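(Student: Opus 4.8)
The plan is to prove the invariant by induction on the number of completed iterations of the while loop, but in fact it is cleaner to prove a stronger statement that holds at \emph{every} point during the execution, namely: at any moment, every vertex $w$ with $w.index$ defined that is currently on the stack $S$ below the current recursion frontier can reach $r$ via arcs in the current $R$, and moreover once a vertex has been popped off $S$ with its index \emph{kept} defined (which, by inspection of the code, happens exactly when it lies on a path to $r$ discovered by a Dynamic-DFS call that reached $r$), no subsequent resampling step touches any arc out of it. The key structural fact, already noted in the text preceding the lemma, is that resampling inside a call Dynamic-DFS$(v)$ only re-randomizes arcs out of vertices that were pushed onto $S$ at or after $v$, i.e.\ vertices indexed no earlier than $v$; hence arcs whose tails are indexed strictly before $v$ are never affected by that call. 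I would isolate this as the base observation and use it repeatedly.

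First I would set up the induction: the invariant holds vacuously at the start of the first while iteration, when only $r$ is on the stack and $r$ trivially reaches $r$. For the inductive step, consider one iteration of the while loop, which picks an unindexed $v$ and calls Dynamic-DFS$(v)$; I would then do a nested induction on the recursion tree of this call. The two cases to handle correspond to the two branches inside Dynamic-DFS. In the branch where the recursion completes without $v.root$ dropping below $v.index$ is impossible at the top level of a fresh call (since $v.root$ is initialized to $v.index$ and can only decrease via arcs to already-indexed vertices); so either the call detects $v.root = v.index$, triggering the Repeat loop, or $v.root < v.index$, meaning $v$ reached some vertex indexed earlier, which by the auxiliary "can reach the vertex indexed by $v.root$" fact and the inductive hypothesis means $v$ — and every vertex pushed during this call above $v$ — can reach $r$; these vertices then remain on $S$ with defined indices and, by the base observation, no later resampling (which occurs only in calls whose root is indexed at or after the point we are now past) disturbs their outgoing arcs. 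In the case $v.root = v.index$, the Repeat loop pops exactly the strongly connected component topping the stack, clears their indices, resamples their out-arcs, and recurses; here I must argue this popped set is genuinely a minimal cluster (so resampling it is legitimate) — this follows from Claim~\ref{clm:cluster:sc} and the standard correctness of Tarjan's algorithm applied to the current $R$ restricted to unset vertices — and that the vertices \emph{not} popped, namely those strictly below on $S$, are untouched and retain their reach-to-$r$ property by the inductive hypothesis together with the base observation.

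The main obstacle I anticipate is bookkeeping the interaction between resampling and the recursion stack: because Dynamic-DFS$(v)$ is restarted after each cluster-pop, one must be careful that the vertices "frozen" as reaching $r$ by \emph{earlier} recursive subcalls are never reconsidered, and that the index counter's decrements keep $index-1$ equal to the number of currently-indexed vertices so that the new indices handed out after a pop are consistent. I would handle this by phrasing the base observation quantitatively in terms of index values — any resampling step triggered within the subtree rooted at a call Dynamic-DFS$(v)$ only rewrites $\arc(w)$ for $w$ with $w.index \ge v.index$ at the time $v$ was indexed — and then noting that a vertex which has been confirmed to reach $r$ has, from that point on, a fixed path all of whose vertices have strictly smaller indices than anything subsequently resampled, so the path survives. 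Finally, at the beginning of the \emph{next} while iteration, the set of vertices with defined indices is precisely $\{r\}$ together with all vertices confirmed during previous iterations, all of which reach $r$ and none of whose out-arcs can be resampled in the future by the same index-monotonicity argument, completing the induction.
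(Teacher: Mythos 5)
Your overall skeleton---induction on the while-loop iterations, the auxiliary fact that a vertex can reach the vertex indexed by its $root$ field, and the observation that a resampling step only rewrites arcs out of vertices indexed at or after the vertex that triggered it---matches the paper's proof. But the step that actually carries the lemma is missing, and the ``stronger invariant holding at every point of the execution'' that you substitute for it is false. Concretely, you claim that whenever a call Dynamic-DFS$(u)$ returns with $u.root<u.index$, then $u$ (and everything pushed above it) can reach $r$ and is never resampled again. Take a vertex $v$ whose only out-arc in the current $R$ goes to an unindexed $u$, and suppose $u$'s only out-arc goes back to $v$. Then Dynamic-DFS$(u)$ returns normally with $u.root=v.index<u.index$, yet $\{u,v\}$ is a minimal cluster: the parent call subsequently finds $v.root=v.index$, pops both vertices off $S$, clears their indices, and resamples $\arc(u)$. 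So a completed call does not ``freeze'' its vertex; note also that in this algorithm no vertex is ever popped off $S$ with its index kept, contrary to your parenthetical---confirmed vertices simply stay on $S$ forever. The underlying logical slip is that $u.root<u.index$ only says $u$ reaches \emph{some} vertex of smaller index; that vertex may be a DFS ancestor of $u$, newly indexed in the current while iteration, whose own status is unresolved when $u$'s call returns. Hence neither the outer inductive hypothesis nor a post-order induction on the recursion tree yields the conclusion at that moment.

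The conclusion can only be drawn once the top-level call of the while iteration has returned, and the missing ingredient is an extremal argument: among all vertices newly indexed at that point, let $u$ be the one with the smallest index that cannot reach $r$. Since the only way to exit a call is with $u.root<u.index$, $u$ reaches a vertex $w$ with $w.index<u.index$; then $w$ reaches $r$ either by the outer inductive hypothesis (if $w$ was indexed before this iteration began) or by the minimality of $u$, a contradiction. This is exactly how the paper closes the gap. The second half of your argument---that earlier-indexed vertices are never resampled---is fine as stated, since they sit below everything pushed during the current iteration and no Repeat loop reaches them; and the material about the popped set being a minimal cluster, while correct in spirit, belongs to the separate \Cref{lem:Tarjan-minimal-cluster} rather than to this invariant.
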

\begin{proof}
  We do an induction on the number of while loops executed.
  The base case is trivial as only $r$ is indexed before the first iteration.
  Suppose we are to execute Dynamic-DFS$(v)$ for some $v$.
  By the induction hypothesis, for any $w$ such that $w.index<v.index$, $w$ can reach the root $r$.
  As $u.root$ is non-increasing for any $u\in V$,
  the only possibility to exit any (recursive) call of Dynamic-DFS$(u)$ is that $u.root<u.index$.
  Suppose after finishing Dynamic-DFS$(v)$, $u$ is the lowest vertex that is newly indexed but cannot reach $r$,
  and $u.root=w.index$ for some $w$.
  Then $u$ can reach $w$ and $w.index < u.index$.
  However, the latter implies that $w$ can reach $r$ by the choice of $u$, which is a contradiction.
\end{proof}

In particular, \Cref{lem:Tarjan-invariant} implies that once the algorithm halts, 
all vertices can reach the root $r$, and the output is a root-connected subgraph.

\begin{lemma}  \label{lem:Tarjan-minimal-cluster}
  Inside the function ``Dynamic-DFS$(v)$'' of \Cref{alg:cluster-popping-Tarjan},
  if $v.root=v.index$ happens, 
  then a minimal cluster is resampled.
\end{lemma}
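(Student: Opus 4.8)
The plan is to show that whenever the test ``$v.root=v.index$'' succeeds inside a call \emph{Dynamic-DFS}$(v)$, the set $C$ of vertices popped off the stack $S$ by the ensuing \textbf{repeat}-loop --- namely $v$ together with every vertex then above it on $S$ --- is a minimal cluster of the current graph $(V,R)$ not containing $r$. Granting this, the loop resamples exactly $\bigcup_{w\in C}\arc(w)=\vbl(B_C)$, which is precisely what it means for the minimal cluster $C$ to be resampled.

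Fix the instant the test succeeds and let $R,S$ be the current arc set and stack. First I would isolate two structural facts, both consequences of the observation already made before \Cref{lem:Tarjan-invariant} (a resampling performed inside \emph{Dynamic-DFS}$(v)$ never rewrites arcs whose head was indexed before $v$) together with the monotonicity of $u.root$. \textbf{(a)} The vertices currently above $v$ on $S$ are exactly the proper descendants of $v$ in the current DFS tree, and every DFS-tree arc still lies in $R$; hence each $x\in C$ is reachable from $v$ in $(V,R)$. \textbf{(b)} For every proper descendant $x$ of $v$ the call \emph{Dynamic-DFS}$(x)$ has already returned, and the pseudocode shows that such a call returns only at a moment when $x.root<x.index$ (otherwise it would pop, resample and re-run); since $x.root$ is touched only inside \emph{Dynamic-DFS}$(x)$, we still have $x.root<x.index$ now. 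I expect the careful verification of \textbf{(a)} and \textbf{(b)} --- checking that Tarjan's DFS-tree and lowlink invariants really do survive the interleaved resampling and re-running --- to be the main obstacle; this is exactly where \Cref{alg:cluster-popping-Tarjan} departs from the textbook algorithm.

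With \textbf{(a)} in hand I would prove the identity $C=\{x\in V\mid v\text{ reaches }x\text{ in }(V,R)\}$. Inclusion ``$\subseteq$'' is part of \textbf{(a)}. For ``$\supseteq$'', walk along any path $v=x_0\to x_1\to\cdots\to x_\ell$ in $R$ and show $x_i\in C$ by induction on $i$: when \emph{Dynamic-DFS}$(x_i)$ processed the arc $x_i\to x_{i+1}$, either $x_{i+1}$ was unindexed and thereby became a descendant of $v$ (so $x_{i+1}\in C$ by \textbf{(a)}), or it was indexed, and the update $x_i.root\gets\min\{x_i.root,x_{i+1}.index\}$ combined with the propagation $v.root\le x_i.root$ down the tree forces $x_{i+1}.index\ge v.root=v.index$, so $x_{i+1}$ sits on $S$ no lower than $v$, i.e.\ $x_{i+1}\in C$. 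This identity shows at once that $C$ is a cluster (being closed under reachability, no arc of $R$ leaves it) and that $r\notin C$ (every non-root push assigns an index $\ge2$ while $r.index=1$ and $r$ sits at the bottom of $S$, so the popped vertices, all of index $\ge v.index$, exclude $r$).

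It remains to strengthen ``cluster'' to ``minimal cluster''. Take any $x\in C\setminus\{v\}$; by \textbf{(b)}, $x.root<x.index$, and the lowlink semantics then exhibits a vertex $w$ reachable from $x$ with $w.index=x.root<x.index$. Since $C$ is a cluster and $x\in C$, everything reachable from $x$ stays in $C$, so $w\in C$; as $v$ is the unique vertex of $C$ of smallest index, induction on the index gives that $x$ reaches $v$. Together with the identity above ($v$ reaches all of $C$) this makes $C$ strongly connected, so no nonempty proper subset of $C$ is a cluster, i.e.\ $C$ is minimal. Hence the \textbf{repeat}-loop resamples the minimal cluster $C$, as claimed.
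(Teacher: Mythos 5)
Your proposal is correct and follows essentially the same route as the paper's proof: both identify the popped set with $\{u \mid u.index \ge v.index\}$, show it has no outgoing arcs via the propagation $v.root \le u.root \le w.index$, and establish strong connectivity by induction on the index using the invariant that each $u$ reaches the vertex indexed $u.root$ together with $u.root < u.index$ for the already-returned recursive calls. The only differences are cosmetic (you prove the cluster property before strong connectivity and phrase it as closure under reachability), and the dynamic subtleties you flag in (a) and (b) are glossed over at the same level in the paper.
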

\begin{proof}
  Suppose $v.index=v.root$.
  We want to show that all vertices indexed after $v$ form a minimal cluster.
  Let $U=\{u\mid u.index \ge v.index\}$.

  Clearly $v$ can reach all vertices in $U$.
  For any $u\in U$, if $u.root < v.index$,
  then $v.root\le u.root < v.index$, contradicting to our assumption.
  Thus $u.root\ge v.index$ for all $u\in U$.
  Let $u_0\in U$ be the lowest indexed vertex that cannot reach $v$.
  Since the recursive call of $u_0$ has exited,
  $v.index \le u_0.root < u_0.index$.
  Thus $u_0.root$ is a vertex in $U$ and can reach $v$ due to the choice of $u_0$.
  It implies that $u_0$ can reach $u_0.root$ and thus $v$, a contradiction.
  Hence, all of $U$ can reach $v$, and thus $U$ is strongly connected.

  If there is any arc in the current $R$ going out from $U$, say $u\rightarrow w$ for some $u\in U$,
  then it must be that $w.index < v.index$.
  However, this implies that $u.root \le w.index < v.index$, contradicting to $u.root\ge v.index$ for all $u\in U$.
  This implies that there is no arc going out from $U$.
  Thus, $U$ is a cluster and is strongly connected, and it must be a minimal cluster.
\end{proof}

Now we are ready to show the correctness and the efficiency of \Cref{alg:cluster-popping-Tarjan}.

\begin{theorem}  \label{thm:Tarjan}
  The output distribution of \Cref{alg:cluster-popping-Tarjan} is the same as that of \Cref{alg:cluster-popping},
  and the expected running time is $O\left(m+\frac{p_{\maxx}mn}{1-p_{\maxx}}\right)$.
\end{theorem}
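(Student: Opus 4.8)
The plan is to establish the two assertions of \Cref{thm:Tarjan} separately: equality of the output distributions, and the running-time bound. Throughout, let $T$ be the (random) number of variable re-randomizations performed, so that \Cref{alg:cluster-popping-Tarjan} realizes a partial rejection sampling execution with this many resamplings and, on a connected bi-directed graph, $\Ex T\le\frac{p_{\maxx}mn}{1-p_{\maxx}}$ by \Cref{thm:cluster-time}.

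For the distribution, I would argue that \Cref{alg:cluster-popping-Tarjan}, read through the resampling table, is merely a \emph{sequential} execution of the same partial rejection sampling process that \Cref{alg:cluster-popping} runs in synchronous rounds. First, \Cref{alg:cluster-popping} is \Cref{alg:prs} instantiated with the minimal-cluster bad events $(B_C)_C$, which form an extremal instance: by \Cref{clm:cluster:sc} distinct minimal clusters are strongly connected, hence vertex-disjoint, hence depend on disjoint arc sets, so \Cref{cond:extremal} holds. Now, by \Cref{lem:Tarjan-minimal-cluster} each ``Resampling step'' of \Cref{alg:cluster-popping-Tarjan} re-randomizes exactly the variable set $\vbl(B_C)=\bigcup_{v\in C}\arc(v)$ of some currently occurring bad event $B_C$, and by \Cref{lem:Tarjan-invariant} the while loop terminates only when every vertex reaches $r$, i.e.\ when no cluster — hence no bad event — occurs. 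Thus, fixing the resampling table, \Cref{alg:cluster-popping-Tarjan} carries out a maximal sequence of single-event resamplings of occurring bad events. For extremal instances such a sequential process is confluent — the terminal assignment does not depend on the order in which occurring bad events are resampled — which is precisely the principle behind the sequential implementation of ``bicycle-popping'' in \cite[Section~4]{GJ18c} (see also \cite{GJL17}). Hence, table by table, \Cref{alg:cluster-popping-Tarjan} and \Cref{alg:cluster-popping} reach the same terminal assignment, so their output distributions coincide and both equal $\pi(\cdot)$.

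For the running time the plan is a charging argument against variable draws. Decompose the cost into: (i) the initial draw, $O(m)$; (ii) the re-randomizations inside the Repeat loop, performing exactly $T$ arc-draws in total, $O(T)$; (iii) the arc examinations in the ``For each $v\to w\in R$'' loops; and (iv) the $O(1)$-per-call bookkeeping of \textsc{Dynamic-DFS}, stack operations, index updates, and while-loop checks (the last done in $O(1)$ by keeping a linked list of unindexed vertices). The key invariant is that between two consecutive calls to \textsc{Dynamic-DFS}$(w)$ the field $w.index$ must have become \textsc{undefined}, which happens only when $w$ is popped and re-randomized inside a Repeat loop; hence \textsc{Dynamic-DFS}$(w)$ is called at most $1+k_w$ times over the whole run, where $k_w$ is the number of times $w$ is resampled. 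Each such call runs the For loop of $w$ exactly once, examining $|\arc(w)|$ arcs, so the total arc-examination cost is at most $\sum_w|\arc(w)|(1+k_w)=m+\sum_w|\arc(w)|\,k_w=m+T$. Likewise the number of \textsc{Dynamic-DFS} calls is $\sum_w(1+k_w)\le n+\sum_w k_w\le n+T$ (each resampled vertex has an out-arc on a connected bi-directed graph, so $\sum_w k_w\le T$), which with $n\le m+1$ bounds (iv) by $O(m+T)$. Summing, the running time is $O(m+T)$; taking expectations and invoking \Cref{thm:cluster-time} yields the stated $O\!\left(m+\frac{p_{\maxx}mn}{1-p_{\maxx}}\right)$.

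I expect the delicate part to be making the distribution claim fully rigorous: one must justify that \Cref{alg:cluster-popping-Tarjan} never halts while an occurring bad event remains, and that resampling occurring bad events one at a time in the DFS-dictated order reaches the same fixed point as the synchronous rounds of \Cref{alg:cluster-popping}. The non-halting point follows because a \textsc{Dynamic-DFS} call from an unindexed vertex explores only finitely many vertices and, since $r$ is pre-indexed, either reaches $r$ (permanently setting a path of vertices via \Cref{lem:Tarjan-invariant}) or bottoms out at a sink strongly connected component, triggering $v.root=v.index$ and a resampling by \Cref{lem:Tarjan-minimal-cluster}; termination with probability one then inherits from that of partial rejection sampling, which is valid since $q_{\emptyset}>0$. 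By contrast the running-time argument is essentially routine once the ``at most one \textsc{Dynamic-DFS}$(w)$ call per intervening resampling of $w$'' invariant is isolated.
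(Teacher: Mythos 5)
Your proposal is correct and follows essentially the same route as the paper: the output-distribution claim rests on \Cref{lem:Tarjan-invariant}, \Cref{lem:Tarjan-minimal-cluster}, and the order-independence of resampling disjoint occurring bad events for extremal instances (which the paper makes precise by invoking Eriksson's polygon property and strong convergence \cite[Theorem 2.1]{Eri96}), and the running time is charged to the initial draw plus the resampled variables and then bounded via \Cref{thm:cluster-time}. Your accounting of the DFS cost --- at most $1+k_w$ calls to Dynamic-DFS$(w)$, each costing $\abs{\arc(w)}$ --- is a correct and more explicit version of the paper's one-line assertion that the running time is linear in the output size plus the number of resampled variables.
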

\begin{proof}
  By \Cref{lem:Tarjan-invariant} and \Cref{lem:Tarjan-minimal-cluster},
  in \Cref{alg:cluster-popping-Tarjan}, only minimal clusters are resampled and the halting rule is when no minimal cluster is present.
  In other words, the resampling rules are the same for \Cref{alg:cluster-popping-Tarjan} and \Cref{alg:cluster-popping},
  except the difference in orderings.
  Given a resampling table, our claim is that the resampled variables are exactly the same for \Cref{alg:cluster-popping-Tarjan} and \Cref{alg:cluster-popping},
  which leads to an identical final state.
  The claim can be verified straightforwardly, or we can use a result of Eriksson \cite{Eri96}.
  If any two minimal clusters are present, then they must be disjoint (\Cref{cond:extremal}), 
  and thus different orders of resampling them lead to the same state for a fixed resampling table.
  This is the \emph{polygon property} in \cite{Eri96},
  which by \cite[Theorem 2.1]{Eri96} implies the \emph{strong convergence property}.
  The latter is exactly our claim.

  Regarding the running time of \Cref{alg:cluster-popping-Tarjan},
  we observe that its running time is linear in the final output and the number of resampled variables.
  The expected number of resampled variables of \Cref{alg:cluster-popping-Tarjan} is the same as that of \Cref{alg:cluster-popping} due to the claim above.
  Thus, \Cref{thm:cluster-time} implies the claimed run-time bound.
\end{proof}

We can further combine \Cref{alg:cluster-popping-Tarjan} with the coupling procedure \cite[Section 5]{GJ18a} to yield
a sampler for edge-weighted spanning connected subgraphs in an \emph{undirected} graph,
which is key to the FPRAS of all-terminal network reliability.
The coupling performs one scan over all edges.
Thus we have the following corollary of \Cref{thm:Tarjan}.

\begin{corollary}\label{cor:random-connected-subgraph}
  There is an algorithm to sample edge-weighted spanning connected subgraphs in an undirected graph $G=(V,E)$
  with expected running time $O\left(m+\frac{p_{\maxx}mn}{1-p_{\maxx}}\right)$, where $n=\abs{V}$ and $m=\abs{E}$.
\end{corollary}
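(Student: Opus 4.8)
The plan is to reduce the undirected problem to the directed one already solved by \Cref{thm:Tarjan}. Given an undirected graph $G=(V,E)$ with edge failure probabilities $(p_e)_{e\in E}$, I would first form the bi-directed graph $\vec G=(V,A)$ by replacing each edge $e=\{u,v\}$ with the two anti-parallel arcs $u\rightarrow v$ and $v\rightarrow u$, each carrying failure probability $p_e$, and pick an arbitrary root $r\in V$. Since $\abs{A}=2m$ and bi-directing does not change $p_{\maxx}$, running \Cref{alg:cluster-popping-Tarjan} on $(\vec G,r)$ draws a sample from the distribution over root-connected subgraphs of $\vec G$ in expected time $O\bigl(2m+\frac{p_{\maxx}\cdot 2m\cdot n}{1-p_{\maxx}}\bigr)=O\bigl(m+\frac{p_{\maxx}mn}{1-p_{\maxx}}\bigr)$ by \Cref{thm:Tarjan}.

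Next I would invoke the coupling procedure of Guo and Jerrum \cite[Section 5]{GJ18a}, which post-processes a root-connected subgraph of $\vec G$ into an edge-weighted spanning connected subgraph of $G$ distributed exactly as $\pi(\cdot)$. The point to emphasise is that this coupling performs a single scan over $E$: for each edge $e$ it inspects which of the two arcs $\{u\rightarrow v,\,v\rightarrow u\}$ survive in the directed sample and, using a bounded amount of fresh randomness, decides whether $e$ is present, in a way that transforms the directed root-connected distribution into the target undirected one. This stage costs $O(m)$ time and introduces no further resampling loop whose cost could dominate. Composing the two stages, the total expected running time is $O\bigl(m+\frac{p_{\maxx}mn}{1-p_{\maxx}}\bigr)+O(m)=O\bigl(m+\frac{p_{\maxx}mn}{1-p_{\maxx}}\bigr)$, and correctness follows from \Cref{thm:Tarjan} together with the correctness of the coupling in \cite{GJ18a}.

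The main obstacle is essentially bookkeeping rather than a new idea: one has to make sure the guarantees of \cite{GJ18a}'s coupling are invoked at the right level of generality (arbitrary per-edge failure probabilities $p_e$, not just a uniform parameter) and that its running time is genuinely $O(m)$ on top of the directed sampler, so that it does not absorb the improvement obtained in \Cref{thm:cluster-time}. Since these facts are exactly what \cite[Section 5]{GJ18a} establishes, the corollary then follows immediately.
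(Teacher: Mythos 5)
Your proposal is correct and matches the paper's argument: the paper likewise obtains the corollary by running \Cref{alg:cluster-popping-Tarjan} on the bi-directed version of $G$ (so \Cref{thm:Tarjan} gives the stated expected time, with $\abs{A}=2m$) and then applying the coupling of Guo and Jerrum, noting that it performs a single $O(m)$ scan over the edges. The additional bookkeeping you flag is exactly what the paper implicitly relies on, so nothing is missing.
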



\subsection{A tight example}\label{sec:tightexampleforcluster}
In this section, we present an example that the expected number of random variables drawn in \Cref{alg:cluster-popping} (and thus \Cref{alg:cluster-popping-Tarjan}) is $2m+(1-o(1))\frac{pmn}{1-p}$,
where $p_a=p$ for all $a\in A$.
Thus, the bound in \Cref{thm:cluster-time} is tight.

Our example is the bi-directed version of the ``lollipop'' graph, where a simple path $P$ of length $n_1$ is attached to a clique $K$ of size $n_2$.
A picture is drawn in \Cref{fig:lollipop}.

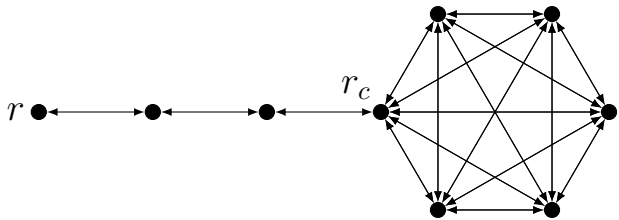
\begin{figure}[htbp]
  \centering
    \begin{tikzpicture}[scale=1, inner sep=2pt, transform shape]
      \draw (180:1.5cm) node [draw,fill,shape=circle,color=black, label=120:{\Large $r_c$}] (v0) {};
      \draw (240:1.5cm) node [draw,fill,shape=circle,color=black] (v1) {};
      \draw (300:1.5cm) node [draw,fill,shape=circle,color=black] (v2) {};
      \draw   (0:1.5cm) node [draw,fill,shape=circle,color=black] (v3) {};
      \draw  (60:1.5cm) node [draw,fill,shape=circle,color=black] (v4) {};
      \draw (120:1.5cm) node [draw,fill,shape=circle,color=black] (v5) {};

      \foreach \x in {0,1,2,3,4,5}
      \foreach \y in {0,1,2,3,4,5}
      {
        \ifthenelse{\x=\y}{}{\draw (v\x) edge [<->] (v\y);}
      }

      \draw (180:3cm) node [draw,fill,shape=circle,color=black] (u1) {};
      \draw (180:4.5cm) node [draw,fill,shape=circle,color=black] (u2) {};
      \draw (180:6cm) node [draw,fill,shape=circle,color=black,label=180:{\Large $r$}] (u3) {};

      \draw (v0) edge [<->] (u1);
      \draw (u1) edge [<->] (u2);
      \draw (u2) edge [<->] (u3);
    \end{tikzpicture}
    \caption{A bi-directed lollipop graph with $n_1=3$ and $n_2=6$.}
  \label{fig:lollipop}
\end{figure}

The main tool is still the formula \eqref{eqn:expected}.
We have constructed an injective mapping $\varphi:\Omega_1^{\vbl}\rightarrow \Omega_0\times V\times A$.
Thus, to derive a lower bound, we just need to lower bound the weighted ratio between $\varphi\left(\Omega_1^{\vbl}\right)$ and $\Omega_0\times V\times A$.
The main observation is that for most $S'\in \Omega_0$,
the tuple $(S',u, v\rightarrow v')\in \varphi\left(\Omega_1^{\vbl}\right)$
as long as $u\in P$ is not the right endpoint of $P$, and $v\rightarrow v'$ is an arc in $K$.
We may choose $n_1$ and $n_2$ so that $\abs{P}=n_1=(1-o(1))n$ and the number of arcs in $K$ is $n_2(n_2-1)=(1-o(1))m$.
The bound in \Cref{alg:cluster-popping} is tight with this choice.

For concreteness, let $n_1 = \ceil{n_2^{1+\eps}}$ for some $0<\eps<1$ and consider $n_2\rightarrow \infty$.
Then the number of vertices is $n=n_1+n_2=(1+n_2^{-\eps})n_1$ and
the number of arcs is $m=2n_1+n_2(n_2-1)=(1+o(1))n_2^2$.
Let the root $r$ be the leftmost vertex of $P$, and $r_c$ be the vertex where $P$ and $K$ intersect.
A subgraph $S'\in \Omega_0$ must contain a directed path from $r_c$ to $r$,
as well as a root-connected subgraph in $K$ with root $r_c$.
For any constant $p$, since $K$ is a clique, with high probability,
a random subgraph (by drawing each edge independently with probability $1-p$) in $K$ is strongly connected.\footnote{This fact is easy to prove. Recall that the analogue connectivity threshold for Erd\H{o}s-R\'enyi random graph is $p=\frac{\log n}{n}$.}
Let $\Omega_0'$ be the set of subgraphs which contain a directed path from $r$ to $r_c$ and strongly connected inside $K$.
Clearly the total weight of $\Omega_0'$ is $1-o(1)$ of that of $\Omega_0$.

For each $S'\in\Omega_0'$, $u\in P\setminus\set{r_c}$, and $v\rightarrow v'$ an arc in $K$,
it is straightforward to verify that the ``repairing'' procedure in \Cref{lem:cluster-injective} goes through.
We first remove the arc $u'\rightarrow u$, where $u'$ is the next vertex on $P$, and call the vertex set that cannot reach the root $U$.
Since $S'\in\Omega_0'$, if we contract strongly connected components in $S'[U]$,
it must be a directed path.
Flip all arcs in $S'[U]$ to get $S$.
The subgraph $S[U]$ must have the same collection of strongly connected components as $S'[U]$,
and the contracted graph is a directed path in the reverse direction.
Clearly there is a unique sink, namely a minimal cluster in $S[U]$, which is the clique $K$ with possibly a few vertices along the path $P$.
Hence $v\rightarrow v'$ is in the unique minimal cluster of $S$.

To summarise, $S\in\Omega_1$, $\wt(S)=\frac{p}{1-p}\cdot\wt(S')$, and $\varphi(S,v\rightarrow v')=(S',u, v\rightarrow v')$.
Thus, by~\eqref{eqn:expected}, the expected number of random variables drawn of \Cref{alg:cluster-popping} is at least
\begin{align*}
  2m+(1-o(1))\cdot\frac{p}{1-p}\cdot \abs{P\setminus\set{r_c}} \cdot n_2(n_2-1) = 2m+(1-o(1))\frac{pmn}{1-p},
\end{align*}
where the term $2m$ accounts for the initialisation.

\begin{remark}
  If we set $\eps=0$, then the running time becomes $\Omega(n^3)$.
  However, the optimal constant (measured in $n^3$) is not clear.
\end{remark}

An interesting observation is that the running time of \Cref{alg:prs} depends on the choice of $r$ in the example above,
although in the reliability approximation algorithm, $r$ can be chosen arbitrarily. (See \cite[Section 5]{GJ18a}.)
However, choosing the best $r$ does not help reducing the order of the running time.
In a ``barbell'' graph, where two cliques are joined by a path,
no matter where we choose $r$, there is a rooted induced subgraph of the same structure as the example above,
leading to the same $\Omega(n^3)$ running time when $\eps=0$.

\subsection{Faster reliability approximation}

The main application of \Cref{alg:cluster-popping} is to approximate the network reliability of a undirected graph $G=(V,E)$,
which is the probability that, assuming each edge $e$ fails with probability $p_e$ independently,
the remaining graph is still connected.
Let $\*p=\left( p_e \right)_{e\in E}$ be the vector of the failure probabilities.
Then the reliability is the following quantity:
\begin{align}\label{eqn:reliability}
  \Zrel(\mathbf{p})=\sum_{\substack{S\subseteq E\\(V,S)\text{ is connected}}}\;\prod_{e\in S}(1-p_e)\prod_{e\not\in S}p_e.
\end{align}
The approximate counting algorithm in \cite{GP14,GJ18a} takes $O(n^2/\eps^2)$ samples of spanning connected subgraphs to produce a $1\pm\eps$ approximation of $\Zrel$.
However, we can rewrite \eqref{eqn:reliability} as a partition function of the Gibbs distribution.
Thus we can take advantage of faster approximation algorithms, such as the one by Kolmogorov \cite{Kol18}.

Let $\Omega$ be a finite set, and the Gibbs distribution $\pi(\cdot)$ over $\Omega$ is one taking the following form:
\begin{align*}
  \pi_{\beta}(X) = \frac{1}{Z(\beta)}\exp(-\beta H(X)),
\end{align*}
where $\beta$ is the \emph{temperature}, $H(X)\ge0$ is the \emph{Hamiltonian},
and $Z(\beta)=\sum_{X\in\Omega}\exp(-\beta H(X))$ is the normalising factor (namely the partition function) of the Gibbs distribution.
We would like to turn the sampling algorithm into an approximation algorithm to $Z(\beta)$.
Typically, this involves calling the sampling oracle in a range of temperatures, which we denote $[\beta_{\minn},\beta_{\maxx}]$.
(This process is usually called simulated annealing.)
Let $Q:=\frac{Z(\beta_{\minn})}{Z(\beta_{\maxx})}$, $q=\log Q$, and $N=\max_{X\in\Omega} H(X)$.
The following result is due to Kolmogorov \cite[Theorem 8 and Theorem 9]{Kol18}.

\begin{proposition}  \label{prop:Kolmogorov}
  Suppose we have a sampling oracle from the distribution $\pi_{\beta}$ for any $\beta\in[\beta_{\minn},\beta_{\maxx}]$.
  There is an algorithm to approximate $Q$ within $1\pm\eps$ multiplicative errors
  using $O(q\log N/\eps^{2})$ oracle calls in average.

  Moreover, the sampling oracle $\widetilde{\mu}_\beta$ can be approximate
  as long as $\Vert\mu_\beta - \widetilde{\mu}_\beta\Vert_{TV} = O(1/(q\log N))$ where $\Vert \cdot\Vert_{TV}$ is the variation distance.
\end{proposition}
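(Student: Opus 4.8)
The statement is quoted from \cite[Theorems~8 and~9]{Kol18}, so the plan is to recall the simulated-annealing scheme of \cite{Kol18} and indicate why the two bounds follow; we do not reprove it here. Everything rests on the telescoping identity
\begin{align*}
  Q^{-1}=\frac{Z(\beta_{\maxx})}{Z(\beta_{\minn})}=\prod_{i=0}^{k-1}\frac{Z(\beta_{i+1})}{Z(\beta_i)},
\end{align*}
valid for any cooling schedule $\beta_{\minn}=\beta_0<\beta_1<\dots<\beta_k=\beta_{\maxx}$, together with the fact that each factor is an expectation under a distribution we can sample from:
\begin{align*}
  \frac{Z(\beta_{i+1})}{Z(\beta_i)}=\Ex_{X\sim\pi_{\beta_i}}\!\bigl[\exp(-(\beta_{i+1}-\beta_i)H(X))\bigr].
\end{align*}
Thus the plan is (i) to choose the schedule so that every factor is well conditioned, (ii) to estimate each factor from oracle samples by an unbiased, low-variance estimator, and (iii) to multiply the estimates (and invert to recover $Q$).

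For step (i) I would invoke Kolmogorov's adaptive construction of a ``well-balanced'' cooling schedule: one produces, on the fly, a schedule of length $k=O(\sqrt{q\log N})$ along which the single-sample estimator $\exp(-(\beta_{i+1}-\beta_i)H)$ has $O(1)$ relative variance, spending only $O(q\log N)$ oracle calls in expectation to find it. Given such a schedule, step (ii) is routine: $s$ samples at temperature $\beta_i$ give an unbiased estimator of the $i$-th factor with relative variance $O(1/s)$, the $k$ estimators are independent, so their product is an unbiased estimator of $Q^{-1}$ with relative variance $O(k/s)$; taking $s=\Theta(k/\eps^{2})$ makes this $O(\eps^{2})$, and a median-of-means boost then yields a $1\pm\eps$ estimate with high probability using $O(ks)=O(k^{2}/\eps^{2})=O(q\log N/\eps^{2})$ oracle calls in total. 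The ``in average'' qualifier is forced because the length $k$ of the adaptively built schedule is itself random; one shows it concentrates around $O(\sqrt{q\log N})$, whence the expected sample count is as claimed. I expect the construction and concentration of the balanced schedule to be the main obstacle --- it is the technical core of \cite{Kol18} --- whereas (ii) and (iii) are standard.

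For the ``moreover'' part, the key point is that the algorithm uses its random samples only through bounded empirical averages (after the normalisation built into the estimator, each summand lies in $(0,1]$). Hence replacing the exact oracle $\mu_\beta$ (the true Gibbs law $\pi_\beta$) by an approximate one $\widetilde{\mu}_\beta$ with $\norm[TV]{\mu_\beta-\widetilde{\mu}_\beta}\le\eta$ shifts the expectation driving each factor by at most $\eta$, and one can couple the approximate execution with an exact one so that the output is unchanged unless one of the $O(q\log N)$ relevant estimation steps is thrown off by more than the slack it is allowed. Pushing this perturbation analysis through the error budget of (ii) shows that $\eta=O(1/(q\log N))$ is small enough to preserve the $1\pm\eps$ guarantee; this is exactly \cite[Theorem~9]{Kol18}, which we invoke.
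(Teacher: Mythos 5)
The paper gives no proof of this proposition at all --- it is stated verbatim as an imported result, ``due to Kolmogorov \cite[Theorem 8 and Theorem 9]{Kol18}'' --- so your choice to defer to that citation is exactly what the paper does, and your accompanying sketch of the telescoping-product/annealing argument and the TV-perturbation robustness is a faithful outline of the cited result. Nothing further is required here.
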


A straightforward application of \Cref{prop:Kolmogorov} to our problem requires $O(m\log n)$ samples.
This is because annealing will be done on all edges.
Instead, we will choose a spanning tree, and perform annealing only on its edges, whose cardinality is $n-1$.
This approach uses only $O(n\log n)$ samples,
but it requires the following slight generalisation of \Cref{prop:Kolmogorov}.

Let $\rho_{\beta}(\cdot)$ be the following distribution over a finite set $\Omega$:
\begin{align}\label{eqn:rho}
  \rho_{\beta}(X) = \frac{1}{Z(\beta)}\exp(-\beta H(X))\cdot F(X),
\end{align}
where $F:\Omega\rightarrow\mathbb{R}^+$ is a non-negative function
and, with a little abuse of notation, $Z(\beta)=\sum_{X\in\Omega}\exp(-\beta H(X))\cdot F(X)$ is the normalising factor.
Still, let $Q:=\frac{Z(\beta_{\minn})}{Z(\beta_{\maxx})}$, $q=\log Q$, and $N=\max_{X\in\Omega} H(X)$.

\begin{lemma}  \label{lem:Kolmogorov-general}
  Suppose we have a sampling oracle from the distribution $\rho_{\beta}$ defined in \eqref{eqn:rho} for any $\beta\in[\beta_{\minn},\beta_{\maxx}]$.
  There is an algorithm to approximate $Q$ within $1\pm\eps$ multiplicative errors using $O(q\log N/\eps^{-2})$ oracle calls.
\end{lemma}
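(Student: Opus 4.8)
The plan is to reduce \Cref{lem:Kolmogorov-general} to \Cref{prop:Kolmogorov} by absorbing the extra weight function $F$ into the state space. First I would observe that $F$ is a fixed, $\beta$-independent non-negative function, so it does not interact with the annealing schedule at all; it only reweights the base measure. Concretely, I would replace the set $\Omega$ by the weighted ground set in which each $X\in\Omega$ carries multiplicity (or measure) proportional to $F(X)$. Formally, one introduces a new finite set $\Omega'$ together with a surjection $\phi\colon\Omega'\to\Omega$ such that the pre-image of each $X$ has total mass $F(X)$ — for instance, if $F$ takes rational values one can literally take $|\phi^{-1}(X)|\propto F(X)$, and in general one passes to a measure space, noting that Kolmogorov's algorithm only ever uses the sampler as a black box and the estimator (a ratio of partition functions) is insensitive to this refinement. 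Under this identification, $\rho_\beta$ on $\Omega$ pulls back to the plain Gibbs distribution $\pi'_\beta(X') \propto \exp(-\beta H(\phi(X')))$ on $\Omega'$ with Hamiltonian $H' := H\circ\phi$, and crucially $Z(\beta)$, $Q$, $q$, and $N=\max H = \max H'$ are all unchanged.

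Next I would check that a sampling oracle for $\rho_\beta$ on $\Omega$ is, for the purposes of \Cref{prop:Kolmogorov}, the same thing as a sampling oracle for $\pi'_\beta$ on $\Omega'$: a sample $X\sim\rho_\beta$ together with the (trivially computable, in our application) fibre structure yields a sample from $\pi'_\beta$, and conversely projecting a sample from $\pi'_\beta$ through $\phi$ gives a sample from $\rho_\beta$. Since the quantity we wish to approximate, $Q = Z(\beta_{\minn})/Z(\beta_{\maxx})$, is identical in the two formulations, applying \Cref{prop:Kolmogorov} to $(\Omega', H', \pi'_\beta)$ immediately gives an algorithm approximating $Q$ within $1\pm\eps$ using $O(q\log N/\eps^{2})$ oracle calls, which is exactly the claim. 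I would also remark that the ``moreover'' part of \Cref{prop:Kolmogorov} transfers verbatim, so an approximate sampler for $\rho_\beta$ within total variation $O(1/(q\log N))$ suffices; this is what will actually be invoked when the oracle is implemented by cluster-popping.

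The only genuinely delicate point — and the one I expect to be the main obstacle to a fully rigorous write-up — is making the ``reweighting'' step legitimate when $F$ is an arbitrary positive real-valued function rather than integer- or rational-valued, so that a literal multiset $\Omega'$ need not exist. The clean fix is to phrase \Cref{prop:Kolmogorov} (as Kolmogorov in fact does) for a general finite measure space rather than a uniform one: the simulated-annealing estimator only relies on being able to sample from $\pi_\beta$ and on the telescoping identity $Q = \prod_i Z(\beta_{i+1})/Z(\beta_i)$ with each ratio estimated by an importance-sampling/Monte-Carlo step, none of which uses uniformity of the base measure. Alternatively, and more cheaply for the paper, one notes that in the intended application $F$ can be taken to be a product of factors $(1-p_e)$ over a fixed tree, hence takes only finitely many values with a common denominator after clearing, so the multiset construction applies directly. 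I would present the measure-space version as the main argument and mention the discretisation as an aside, keeping the proof to a few lines.
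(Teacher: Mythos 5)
Your core reduction is exactly the paper's: absorb $F$ into the state space by duplicating each $X$ with multiplicity proportional to $F(X)$, observe that $Q$, $q$, and $N$ are unchanged, and apply \Cref{prop:Kolmogorov} to the resulting plain Gibbs distribution with the $\rho_\beta$-oracle standing in for the $\pi'_\beta$-oracle. You also correctly isolate the one delicate point, namely that a literal multiset need not exist for irrational $F$. Where you diverge is in how that point is closed. Your first fix (restate \Cref{prop:Kolmogorov} over a general finite measure space) requires reopening Kolmogorov's proof and asserting that nothing in it uses uniformity of the base measure --- plausible, but it abandons the black-box statement the paper actually has available. Your second fix (rationality of $F$ in the application) proves less than the lemma claims, since $F$ is an arbitrary function $\Omega\to\mathbb{R}^+$ there. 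The paper takes a third route you did not consider: duplicate each $X$ exactly $\floor{\ell F(X)}$ times for an integer parameter $\ell$, note that $1-\delta_\ell\le Z'(\beta)/(\ell Z(\beta))\le 1$ where $\delta_\ell\to 0$ as $\ell\to\infty$, and then invoke the \emph{approximate sampling oracle} clause of \Cref{prop:Kolmogorov}: for $\ell$ large enough, the exact sampler for $\rho_\beta$ is within total variation $O(1/(q\log N))$ of $\pi'_\beta$, so the algorithm returns an $\eps$-approximation to $Q'=Z'(\beta_{\minn})/Z'(\beta_{\maxx})$, and $Q'/Q\to 1$. This handles arbitrary real-valued $F$ while using only the stated black box, which is precisely the role the ``moreover'' sentence of \Cref{prop:Kolmogorov} plays --- your proposal instead reserves that clause for the cluster-popping implementation. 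Your argument is morally sound, but as written it either proves a weaker statement or leans on an unverified generalisation; the rounding-plus-robustness step is the missing ingredient.
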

\begin{proof}
  We claim that we can straightforwardly apply the algorithm in \Cref{prop:Kolmogorov} to get an approximation to $Q$ for $\rho_{\beta}$.

  To see this, let $\ell\ge 0$ be an integer.
  Let $\Omega'$ be the multi-set of $\Omega$ where each $X$ is duplicated $\floor{\ell F(X)}$ times.
  To avoid multi-set, we can simply give each duplicated $X$ an index to make $\Omega'$ an ordinary set.
  Consider the following Gibbs distribution over $\Omega'$:
  \begin{align*}
    \pi_{\beta}(X) = \frac{1}{Z'(\beta)}\exp(-\beta H(X)),
  \end{align*}
  where $Z'(\beta)=\sum_{X\in\Omega'}\exp(-\beta H(X))$.
  Let
  \begin{align*}
    \delta_{\ell}:=\min_{X\in\Omega}\left(1-\frac{\floor{\ell F(X)}}{\ell F(X)}\right).
  \end{align*}
  We have that
  \begin{align*}
    1-\delta_{\ell} \le \frac{Z'(\beta)}{\ell Z(\beta)} \le 1.
  \end{align*}
  Clearly, as $\ell\rightarrow \infty$, $\delta_{\ell}\rightarrow 0$.
  We choose $\ell$ large enough
  so that $\delta_{\ell}=O(1/q\log N)$ is within the threshold in \Cref{prop:Kolmogorov} for approximate sampling oracle.
  Thus, the output of directly applying the algorithm in \Cref{prop:Kolmogorov} with sampling oracle $\rho_{\beta}$ also yields an $\eps$-approximation to $Q':=\frac{Z'(\beta_{\minn})}{Z'(\beta_{\maxx})}$.
  (Note that we do not really run the algorithm on $\Omega'$.)
  Since $\frac{Q'}{Q}\rightarrow 1$ as $\ell\rightarrow \infty$,
  the output of directly applying \Cref{prop:Kolmogorov} is in fact an $\eps$-approximation to $Q$.
\end{proof}

Suppose we want to evaluate $\Zrel(\mathbf{p})$ for a connected undirected graph $G=(V,E)$. Since $G$ is connected, $m\ge n-1$, where $m=\abs{E}$ and $n=\abs{V}$.
Fix an arbitrary spanning tree $T$ of $G$ in advance. Let
\begin{align*}
  F(S):=\prod_{e \in T\setminus S} \frac{p_e}{1-p_e}\prod_{e \in S\setminus T}(1-p_e) \prod_{e \in E\setminus (T \cup S)} p_{e}.
\end{align*}
Let $\Omega_0$ be the set of all spanning connected subgraphs of $G$,
and $\rho_{\beta}(S)$ be the following distribution over~$\Omega_0$:
\begin{align*}
  \rho_{\beta}(S)=\frac{1}{Z(\beta)}\exp(-\beta H(S))\cdot F(S),
\end{align*}
where $\beta\ge 0$ is the temperature, $H(S):=\abs{T\setminus S}=n-1-\abs{T\cap S}$ is the Hamiltonian,
and the normalising factor $Z(\beta)=\sum_{S\in\Omega_0}\exp(-\beta H(S))\cdot F(S).$
For any $\beta\ge 0$, let $0<p'_{e}\le p_e$ be the probability such that $\frac{p_e \exp(-\beta)}{1-p_e} = \frac{p'_e}{1-p'_e}$.
We have that for any $S\in\Omega_0$,
\begin{align*}
  \rho_{\beta}(S) &=\frac{1}{Z(\beta)}\prod_{e \in T\setminus S} \frac{p_e \exp(-\beta)}{1-p_e} \prod_{e \in S\setminus T}(1-p_e) \prod_{e \in E\setminus (T \cup S)} p_{e}
  \\ &= \frac{1}{Z(\beta)}\prod_{e \in T\setminus S} \frac{p'_e }{1-p'_e}\prod_{e \in S\setminus T}(1-p_e) \prod_{e \in E\setminus (T \cup S)} p_{e}.
\end{align*}

To draw a sample from $\rho_{\beta}$, we use \Cref{cor:random-connected-subgraph},
for a vector $\*p$ such that every $e\in T$ fails with probability $p'_{e}$,
and every other $e\not\in T$ fails with probability $p_e$.
To see that this recovers the distribution $\rho_{\beta}$, notice that the weight $\wt(S)$ assigned to each subgraph $S\in\Omega_0$ is
\begin{align*}
  \wt(S) & = \prod_{e \in T\cap S}(1-p'_e)\prod_{e \in T\setminus S} p'_{e} \prod_{e \in S\setminus T}(1-p_e) \prod_{e \in E\setminus (T \cup S)} p_{e}\\
  & =  \rho_{\beta}(S)\cdot Z(\beta)\prod_{e \in T}(1-p'_e)\propto \rho_{\beta}(S).
\end{align*}

Let $\beta_{\minn}=0$ and $\beta_{\maxx}=\infty$.
Indeed, $\beta=\infty$ corresponds to $p'_e=0$.
Hence, $\rho_{\infty}(S)\neq 0$ if and only if $T\subseteq S$.
This condition implies that $S\in\Omega_0$, and
\begin{align*}
  \exp(-\infty\cdot H(S))\cdot F(S)=
  \begin{cases}
    \prod_{e\in S\setminus T}(1-p_e)\prod_{e\in E\setminus (T \cup S)}p_e & \text{ if $T\subseteq S$;}\\
    0 & \text{ otherwise.}
  \end{cases}
\end{align*}
Thus, $Z(\infty) = 1$.

On the other hand, $Z(0) = \frac{\Zrel(\*p)}{\prod_{e \in T}(1-p'_e)}$.
Then
\begin{align*}
  Q=\frac{Z(0)}{Z(\infty)} = \frac{\Zrel(\*p)}{\prod_{e \in T}(1-p'_e)} \le \prod_{e \in T}(1-p_e)^{-1},
\end{align*}
and $q=\log Q\le (n-1)\log\frac{1}{1-p_\maxx}$.
Clearly, multiplicatively approximating $Q$ is the same as approximating $\Zrel(\*p)$.
Moreover, $\max_{S\in\Omega_0}H(S)\le n-1$.
Thus, applying \Cref{lem:Kolmogorov-general},
we have an approximation algorithm for $Z(\beta)$ with $O(q\log N/\eps^2)=O(n\log n \log \frac{1}{1-p_\maxx}/\eps^2)$ oracle calls in expectation.
Combining with \Cref{cor:random-connected-subgraph}, we have the following theorem.

\begin{theorem}  \label{thm:reliability-faster}
  There is fully polynomial-time randomised approximation scheme for $\Zrel(\mathbf{p})$,
  which runs in time $O\left(\frac{mn^2\log n}{\eps^{2}(1-p_{\maxx})}\cdot\log \frac{1}{1-p_{\maxx}}\right)$ for an $(1\pm\eps)$-approximation.
\end{theorem}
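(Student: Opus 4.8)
The plan is to bolt together the two tools prepared just above. By the reformulation immediately preceding the theorem, a multiplicative $(1\pm\eps)$-approximation of $\Zrel(\*p)$ is obtained from one of $Q = Z(0)/Z(\infty)$, since $Q = \Zrel(\*p)/\prod_{e\in T}(1-p'_e)$ and the product is an explicitly computable constant; and we have already recorded $q = \log Q \le (n-1)\log\frac{1}{1-p_{\maxx}}$ together with $N = \max_{S\in\Omega_0} H(S) \le n-1$, while $Z(\infty)=1$ is known exactly so the endpoint $\beta=\infty$ needs no sampling. The first step is therefore to invoke \Cref{lem:Kolmogorov-general} with $\beta_{\minn}=0$ and $\beta_{\maxx}=\infty$: it returns a $(1\pm\eps)$-approximation of $Q$ after $O(q\log N/\eps^{2}) = O\bigl(n\log n\,\log\tfrac{1}{1-p_{\maxx}}/\eps^{2}\bigr)$ calls to a sampling oracle for $\rho_\beta$.

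Second, I would realise each oracle call. As verified above, sampling $S\sim\rho_\beta$ coincides with drawing an edge-weighted spanning connected subgraph of $G$ in which every tree edge $e\in T$ fails with probability $p'_e$ (defined by $\tfrac{p'_e}{1-p'_e}=\tfrac{p_e\exp(-\beta)}{1-p_e}$) and every non-tree edge fails with its original probability $p_e$. Because $p'_e\le p_e\le p_{\maxx}$, the maximum failure probability of this modified instance is still at most $p_{\maxx}$, so \Cref{cor:random-connected-subgraph} draws such a sample \emph{exactly} in expected time $O\bigl(m+\tfrac{p_{\maxx}mn}{1-p_{\maxx}}\bigr)=O\bigl(\tfrac{mn}{1-p_{\maxx}}\bigr)$, the second equality using $m\le\tfrac{mn}{1-p_{\maxx}}$. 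An exact sampler is in particular within any total-variation tolerance, so the approximate-oracle clause of \Cref{prop:Kolmogorov} imposes no extra work.

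Third, I would compose the bounds. Each oracle call costs $O\bigl(\tfrac{mn}{1-p_{\maxx}}\bigr)$ in expectation, uniformly in $\beta$ and independently of the history of the annealing run, so by Wald's identity (equivalently, a step-by-step conditioning argument, since the event that a call is made depends only on past samples, not on how long they took) the total expected running time is at most
\begin{align*}
  O\!\left(\frac{n\log n\,\log\tfrac{1}{1-p_{\maxx}}}{\eps^{2}}\right)\cdot O\!\left(\frac{mn}{1-p_{\maxx}}\right)
  = O\!\left(\frac{mn^{2}\log n}{\eps^{2}(1-p_{\maxx})}\cdot\log\frac{1}{1-p_{\maxx}}\right),
\end{align*}
which is the asserted bound. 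Turning this expected-time Las Vegas procedure into one that halts within the stated bound with the success probability demanded of an FPRAS is routine: truncate each run at a constant multiple of its expectation and restart on overflow, losing only a constant factor by Markov's inequality, and boost the confidence of the $Q$-estimate by the usual median-of-independent-runs trick if \Cref{prop:Kolmogorov} does not already supply it.

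There is no genuinely hard step here; the substance sits in \Cref{thm:cluster-time}, \Cref{thm:Tarjan}, and \Cref{lem:Kolmogorov-general}, all proved above. The one design decision worth flagging is that the annealing is carried out over the $n-1$ edges of a fixed spanning tree rather than over all $m$ edges — this is precisely what the generalisation from \Cref{prop:Kolmogorov} to \Cref{lem:Kolmogorov-general} buys us, and it is what makes the final running time carry an $n^{2}$ factor rather than $mn$ (which would have degraded $mn^{2}$ to $m^{2}n\ge mn^{2}$).
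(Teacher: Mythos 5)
Your proposal is correct and follows essentially the same route as the paper: reformulate $\Zrel$ as a ratio $Q=Z(0)/Z(\infty)$ of partition functions for the tree-annealed Gibbs distribution, apply \Cref{lem:Kolmogorov-general} with $q\le (n-1)\log\frac{1}{1-p_{\maxx}}$ and $N\le n-1$, and realise each oracle call exactly via \Cref{cor:random-connected-subgraph} with the modified failure probabilities $p'_e\le p_{\maxx}$ on tree edges. The additional remarks on Wald's identity and on truncation/median boosting are routine padding that the paper leaves implicit, but they do not change the argument.
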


\section{Cycle-popping}

Cycle-popping \cite{Wilson96,PW98} is a very simple algorithm to sample spanning trees in a connected undirected graph $G=(V,E)$.
This algorithm actually generates rooted trees, so we will pick a special root vertex $r$.
Of course, there is no real difference between rooted and unrooted spanning trees
as the root can be chosen arbitrarily in any given spanning tree.

We consider a slightly more general setting, by giving each edge $e$ a weight $w_e>0$.
For each vertex $v\in V$ other than $r$, we associate a random arc with $v$,
which points to a neighbour of $v$ so that $u$ is chosen with probability proportional to $w_{(u,v)}$.
We denote such an assignment by $\sigma$, and $\sigma(v)$ is the neighbour of $v$ that is pointed at.
Any such $\sigma$ induces a directed graph with $n-1$ arcs, where $n=\abs{V}$.
The weight of $\sigma$ is $\wt(\sigma):=\prod_{v\in V,\;v\neq r}w_{(v,\sigma(v))}$,
and the target distribution $\pi(\cdot)$ is $\pi(\sigma)\propto\wt(\sigma)$
with support on the set of $\sigma$ that induces a spanning tree.

Since we want trees in the end, cycles will be our bad events.
For each cycle $C$, we associate with it a bad event $B_C$ which indicates the presence of $C$.
The set $\vbl(B_C)$ consists of random arcs associated with vertices along $C$.
It is clear that if there is no cycle, the graph must be a spanning tree.
The number of bad events can be exponentially large, since there can be exponentially many cycles in $G$.
A description is given in \Cref{alg:cycle-popping}.

\begin{algorithm}
  \caption{Cycle-popping}
  \label{alg:cycle-popping}
  For each vertex $v\neq r$, let $\sigma(v)$ be a neighbour of $v$ with probability proportional to $w_e$ independently\;
  \While{there is at least one cycle under $\sigma$}{
    Find all cycles $C_1,\dots,C_k$ under $\sigma$\;
    Re-randomize $\bigcup_{i=1}^k\vbl(B_{C_i})$\;
  }
  \KwRet{the subgraph induced by the final $\sigma$}
\end{algorithm}

Condition \ref{cond:extremal} is easy to verify.
Two cycles are dependent if they share at least one vertex.
Suppose a cycle $C$ is present, and $C'\neq C$ is another cycle that shares at least one vertex with $C$.
If $C'$ is also present,
then we may start from any vertex $v\in C\cap C'$, and then follow the arrows $v\to v'$.
Since both $C$ and $C'$ are present, it must be that $v'\in C\cap C'$ as well.
Continuing this argument we will go back to $v$ eventually.
Thus $C=C'$. Contradiction!

\subsection{Expected running time}

Next we turn to the running time of the cycle-popping algorithm,
and define our injective mapping $\varphi:\Omega_1^{\vbl}\rightarrow \Omega_0\times V\times A$,
where $A$ is the set of all ordered pairs from $E$, namely, $A=\{u\rightarrow v,v\rightarrow u\mid (u,v)\in E\}$.
Hence $\abs{A}=2\abs{E}$.

We fix in advance an arbitrary ordering of vertices and edges.
Let $\sigma\in\Omega_{C}\subseteq\Omega_1$ be an assignment of random arcs so that there is a unique cycle $C$.
Let $u$ be a vertex on the cycle $C$ and suppose $\sigma(u)=u'$.
It is easy to see that there are two components in the subgraph induced by $\sigma$:
a directed tree with root $r$, and the directed cycle with a number of directed subtrees rooted on the cycle.
Since $G$ is connected, there must be an edge joining the two components.
Let this edge be $(v_0,v_1)$, where $v_0$ is in the tree component and $v_1$ is in the unicyclic component.
Starting from $v_1$ and following arcs under $\sigma$, we will eventually reach the cycle and arrive at $u$.
Let vertices along this path be $v_2,v_3,\dots,v_{\ell}=u$. (It is possible that $\ell=1$ or $u'$ is along the path.)

To ``fix'' $\sigma$, we reassign the arrow out of $v_i$ from $v_{i+1}$ to $v_{i-1}$ for all $1\le i\le \ell$
(in case of $v_{\ell}=u$, it is rerouted from $u'$ to $v_{\ell-1}$),
and call the resulting assignment $\sigmafix$.
Namely, $\sigmafix(v_i)=v_{i-1}$ for all $i\in [\ell]$, and $\sigmafix(v)=\sigma(v)$ otherwise.
It is easy to verify that $\sigmafix\in\Omega_0$.

Now we are ready to define the injective mapping $\varphi$.
For $\sigma\in\Omega_{C}\subseteq\Omega_1$ and $u\in C$, let
\begin{align}\label{eqn:cycle-varphi}
  \varphi(\sigma,u):=(\sigmafix,v_0, u\rightarrow u'),
\end{align}
where $\sigmafix$ and $v_0$ are defined above, and $u'=\sigma(u)$.

Let $r_{\maxx}:=\max_{e,e'\in E}\frac{w_e}{w_{e'}}$.

\begin{lemma}  \label{lem:cycle-injective}
  The mapping $\varphi:\Omega_1^{\vbl}\rightarrow \Omega_0\times V\times A$ defined in \eqref{eqn:cycle-varphi}
  is $r_{\maxx}$-preserving and injective.
\end{lemma}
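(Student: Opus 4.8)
The plan is to establish the two properties separately, following the template already used for the cluster-popping mapping in \Cref{lem:cluster-injective}. For $r_{\maxx}$-preservingness, I would observe that passing from $\sigma$ to $\sigmafix$ only \emph{reroutes} the arcs out of the vertices $v_1,\dots,v_\ell$: each such vertex still contributes exactly one arc to the weight, so the weight of $\sigmafix$ is $\prod_{i\in[\ell]} w_{(v_i,v_{i-1})} / \prod_{i\in[\ell]} w_{(v_i,\sigma(v_i))}$ times $\wt(\sigma)$. Every individual ratio $w_{(v_i,v_{i-1})}/w_{(v_i,v_{i+1})}$ is at least $1/r_{\maxx}$; but in fact one must be slightly careful because a telescoping argument is tempting yet wrong here, so I would just bound the product crudely — actually the cleanest route is to note that the rerouting along the path $v_1,\dots,v_\ell$ replaces the arcs $v_i\to v_{i+1}$ by $v_i\to v_{i-1}$, and since these are all distinct edges traversed in opposite directions along essentially the same path, the total weight change is dominated by a single ``extra'' edge $(v_0,v_1)$ entering and the cycle edge out of $u$ leaving; the net effect is a multiplicative change of at most $r_{\maxx}$. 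I expect the paper's proof to phrase this compactly, observing that all but one edge is reused in the opposite direction.

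For injectivity, the strategy is the standard decoding argument: given the image $(\sigmafix, v_0, u\rightarrow u')$, I reconstruct $(\sigma, u)$. The vertex $u$ is read off directly from the third coordinate, so it suffices to recover $\sigma$. First I would identify the path $v_0=$ root-side endpoint; since in $\sigmafix$ the arcs $v_1\to v_0, v_2\to v_1,\dots,u\to v_{\ell-1}$ form a directed path terminating at $v_0$ and $v_0$ lies in the tree component of $\sigmafix$, I can trace backwards from $v_0$ — but the subtlety is knowing \emph{where to stop}, i.e.\ recovering $\ell$ and the vertex $v_1$. The key observation (analogous to recovering $u\to u'$ first in the cluster case) is that $u\to u'$ is given, so I know the cycle $C$ that must be recreated: starting from $u'$ and following $\sigmafix$, one either returns to $u$ (recovering $C$) or not. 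I would argue that the path $v_\ell=u, v_{\ell-1},\dots,v_1$ is exactly the maximal directed path in $\sigmafix$ ending at $v_0$ that avoids the cycle-to-be, and that reversing these arcs and restoring $u\to u'$ yields $\sigma$; all other arcs of $\sigmafix$ agree with $\sigma$.

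The main obstacle, as in the cluster-popping lemma, is the decoding step: one must prove that the structure of $\sigmafix$ determines $v_1$ (equivalently $\ell$) unambiguously, i.e.\ that no two distinct $\sigma$'s produce the same triple. I would handle this by characterising the path $v_1,\dots,v_\ell$ intrinsically within $\sigmafix$ — for instance, $v_0$ has a unique in-neighbour among $\{v_1\}$ via the flipped arcs, and following flipped arcs from $v_0$ we reach $u$; crucially, the edge $(v_0,v_1)$ is detectable because in $\sigmafix$ the arc goes $v_1\to v_0$ and removing the reconstructed cycle $C$ disconnects $u$'s former component, pinpointing which incoming arc of the tree component was the ``new'' one. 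I would also need to double-check the degenerate cases flagged in the text ($\ell=1$, or $u'$ lying on the path $v_1,\dots,v_\ell$): when $\ell=1$ we have $v_1=u$ and $\sigmafix(u)=v_0$, which is still recoverable since $u$ is given; when $u'$ is on the path, the rerouting at $u$ still sends $u\mapsto v_{\ell-1}$ and the old value $u'$ is supplied in the third coordinate, so no information is lost. Once these cases are verified, injectivity follows, and combined with $r_{\maxx}$-preservingness the lemma is complete.
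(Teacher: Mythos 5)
Your proposal is correct in substance and reaches the same conclusions as the paper, but your decoding argument is more roundabout than it needs to be, and one aside is mistaken. On preservation, the clean statement is exactly the ``telescoping'' you dismiss as wrong: writing $\wt(\sigmafix)/\wt(\sigma)$ as the product over $i\in[\ell]$ of $w_{(v_i,v_{i-1})}/w_{(v_i,\sigma(v_i))}$, every factor cancels except $w_{(v_0,v_1)}$ in the numerator and $w_{(u,u')}$ in the denominator, because each path edge $(v_i,v_{i+1})$ for $1\le i\le \ell-1$ appears once in each product and orientation does not affect $w_e$. So the ratio is exactly $w_{(v_0,v_1)}/w_{(u,u')}\le r_{\maxx}$, which is precisely the paper's phrasing that the only difference between $\sigma$ and $\sigmafix$ is the removal of $(u,u')$ and the inclusion of $(v_1,v_0)$. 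On injectivity, your worry about ``where to stop'' and your detour through reconstructing the cycle $C$ from $u'$ are unnecessary: both endpoints $v_0$ and $u$ are recorded in the image, and $\sigmafix$ induces a spanning tree, so the path $v_0,v_1,\dots,v_\ell=u$ is simply the unique tree path between the two recorded vertices (equivalently, follow arcs from $u$ under $\sigmafix$ until first reaching $v_0$). This pins down $\ell$ and all the $v_i$ at once; then $\sigma(v_i)=v_{i+1}$ for $i\in[\ell-1]$, $\sigma(u)=u'$ from the third coordinate, and $\sigma=\sigmafix$ elsewhere, which also disposes of your degenerate cases ($\ell=1$, or $u'$ on the path) without separate treatment. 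Your version would likely go through once made precise, but the maximality and disconnection arguments you sketch are exactly the case analysis that the paper's one-line ``unique path in a tree'' observation avoids.
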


\begin{proof}
  To verify that $\varphi$ is injective, we just need to recover $\sigma$ given $\sigmafix$, $v_0$, and $u\rightarrow u'$.
  Since $\sigmafix$ is a spanning tree, there is a unique path between $v_0$ and $u$ under $\sigmafix$.
  This recovers $v_1,v_2,\dots,v_{\ell}$,
  and $\sigma(v)=\sigmafix(v)$ for all other vertices since their assignments are unchanged.
  Moreover, to recover $\sigma(v_i)$ for $i\in[\ell]$ is also easy --- $\sigma(v_i)=v_{i+1}$ for $i\in[\ell-1]$ and $\sigma(u)=u'$.

  To verify that $\varphi$ is $r_{\maxx}$-preserving,
  just notice that directions do not affect weights,
  and then the only difference between $\sigma$ and $\sigmafix$ is the removal of $(u,u')$ and the inclusion of $(v_1,v_0)$.
  Thus the change in weights is at most $r_{\maxx}$.
\end{proof}

Combining \Cref{lem:cycle-injective} and \Cref{thm:run-time} (with $\aux=V\times A$) implies
a run-time upper bound of the cycle-popping algorithm.

\begin{theorem}\label{thm:cycle-time}
  To sample edge-weighted spanning trees,
  the expected number of random variables drawn in \Cref{alg:cycle-popping} on a connected undirected graph $G=(V,E)$ is
  at most $(n-1)+2r_{\maxx}mn$, where $n=\abs{V}$, $m=\abs{E}$, and $r_{\maxx}:=\max_{e,e'\in E}\frac{w_e}{w_{e'}}$.
\end{theorem}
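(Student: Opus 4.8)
The plan is to apply \Cref{thm:run-time} to the mapping $\varphi$ of \eqref{eqn:cycle-varphi}, using the auxiliary set $\aux = V \times A$. By \Cref{lem:cycle-injective}, $\varphi\colon\Omega_1^{\vbl}\to\Omega_0\times V\times A$ is $r_{\maxx}$-preserving and injective, so \Cref{thm:run-time} immediately gives that the expected number of resampled variables over the course of the while loop is at most $r_{\maxx}\abs{\aux} = r_{\maxx}\cdot\abs{V}\cdot\abs{A}$. Since $\abs{A}=2m$ (each edge contributes two ordered pairs, as noted just before \Cref{lem:cycle-injective}), this is $2r_{\maxx}mn$. The only remaining item is to account for the initialisation step of \Cref{alg:cycle-popping}, which draws one random arc for each of the $n-1$ non-root vertices before the loop begins; these variables are not counted by $T$, which tracks only resamplings inside the loop. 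Adding $n-1$ to the bound from \Cref{thm:run-time} yields the claimed $(n-1)+2r_{\maxx}mn$.

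Concretely, the steps in order are: (i) recall that $\Omega_0$ is the set of assignments inducing a spanning tree, $\Omega_1$ the set of assignments with a unique cycle, and that \Cref{cond:extremal} holds for the cycle bad events (verified in the text preceding the subsection), so that \eqref{eqn:expected} applies and the quantity $\Ex T = \sum_i q_i\abs{\vbl(A_i)}/q_\emptyset$ is exactly the expected number of variables resampled in the loop; (ii) invoke \Cref{lem:cycle-injective} to obtain the $r_{\maxx}$-preserving injection into $\Omega_0\times V\times A$; (iii) apply \Cref{thm:run-time} with $\aux=V\times A$ to bound this sum by $r_{\maxx}\abs{V}\abs{A}=2r_{\maxx}mn$; (iv) add the $n-1$ initial draws. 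Since $G$ is connected we have $m\ge n-1$, so the bound is genuinely of order $r_{\maxx}mn$.

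There is essentially no obstacle remaining: all the real work is in \Cref{lem:cycle-injective}, whose proof is already given, and in \Cref{thm:run-time}. The only points requiring a sentence of care are the bookkeeping of $\abs{A}=2m$ versus $m$, and making explicit that \eqref{eqn:expected} counts only loop resamplings so that the $n-1$ initialisation draws must be added separately. One could alternatively absorb the initialisation into $\Omega_0^{\vbl}$-style bookkeeping, but stating it as a standalone additive $n-1$ is cleanest and matches how the tight example in the cluster-popping section handled the analogous $2m$ term.
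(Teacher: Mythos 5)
Your proposal is correct and follows exactly the paper's argument: the paper proves \Cref{thm:cycle-time} by combining \Cref{lem:cycle-injective} with \Cref{thm:run-time} using $\aux=V\times A$ (so $\abs{\aux}=2mn$), with the additive $n-1$ accounting for the initial draws. Your extra bookkeeping remarks about $\abs{A}=2m$ and the initialisation term are accurate and consistent with the text.
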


\begin{remark}
  It is very tempting to use $\aux=V\times E$ instead of $V\times A$ in the proof above,
  which would yield an improvement by a factor $2$.
  Unfortunately that choice does not work.
  The reason is that,
  if we use an unordered pair $(u,u')$, then given $\sigmafix$ and $v_0$,
  it is not always possible to distinguish $u$ and $u'$.
  To see this,
  consider an assignment $\sigma\in\Omega_{C}$,
  and another assignment $\sigma'\in\Omega_{C}$ which is the same as $\sigma$ except that the orientation on $C$ is reversed.
  It is easy to check that the ``fixed'' versions of $(\sigma,u)$ and $(\sigma',u')$ are the same if we do not record the direction of $(u,u')$.
  We will see next that the factor $2$ in fact is unavoidable.
\end{remark}

\subsection{A tight example}

We also give a matching lower bound to complement \Cref{thm:cycle-time}.
Recall \Cref{sec:tightexampleforcluster}.
The general strategy is to construct an example
where $\varphi(\Omega_1^{\vbl})$ constitutes most of $\Omega_0\times V\times A$,
and then invoke \eqref{eqn:expected}.

We use the undirected version of the same ``lollipop'' graph as in \Cref{sec:tightexampleforcluster}.
(Recall \Cref{fig:lollipop}.)
Namely, a clique $K$ of size $n_2$ joined with a path $P$ of length $n_1$,
where $n_1 = \ceil{n_2^{1+\eps}}$ for some $0<\eps<1$.
Consider $n_2\rightarrow \infty$.
The number of vertices is $n=n_1+n_2=(1+n_2^{-\eps})n_1$ and the number of edges is $m=n_1+\frac{n_2(n_2-1)}{2}=(1/2+o(1))n_2^2$.
Let the root $r$ be the leftmost vertex of $P$, and $r_c$ be the vertex where $P$ and $K$ intersect.
Moreover, we put weight $w_c$ on all edges in $K$,
and $w_p\le w_c$ on all edges in $P$.

For a tuple $(\sigma',v_0,u\rightarrow u')$ to belong to $\varphi(\Omega_1^{\vbl})$,
the main constraint is that $v_0$ should be an ancestor of $u$ in the spanning tree induced by $\sigma'$.
In this example, any vertex $v\in P$ is an ancestor of any other vertex $u\in K$ in an arbitrary spanning tree.
Thus, for any $\sigma'\in\Omega_0$, $v_0\in P\setminus\set{r_c}$, and $u\rightarrow u'$ where $u,u'\in K$,
we can apply the ``repairing'' procedure as given in \Cref{lem:cycle-injective} to get $\sigma\in\Omega_1$
so that $\varphi(\sigma,u)= (\sigma',v_0,u\rightarrow u')$.
This is easy to verify, by finding the unique path between $v_0$ and $u$, and then reassign $\sigma$ along the path.
Since we remove one edge on the path and include one edge in the clique,
$\wt(\sigma)=\frac{w_c}{w_p}\cdot\wt(\sigma')=r_{\maxx}\wt(\sigma')$.
Thus, by~\eqref{eqn:expected}, the expected number of random variables drawn of \Cref{alg:cycle-popping} is at least
\begin{align*}
  n-1+r_{\maxx}\cdot \abs{P\setminus\set{r_c}} \cdot n_2(n_2-1) = n-1+(2-o(1))r_{\maxx}mn,
\end{align*}
where the term $n-1$ accounts for the initialisation.

Similarly to \Cref{sec:tightexampleforcluster},
the choice of $r$ affects the running time of \Cref{alg:cycle-popping}.
Indeed, in Wilson's algorithm \cite{Wilson96,PW98},
the root is chosen randomly according to the stationary distribution of the random walk,
and with that choice the running time is $O(n^2)$ in a lollipop graph (with $\eps=0$, namely $n_1\asymp n_2$).
However, also similarly to \Cref{sec:tightexampleforcluster},
the running time in a ``barbell'' graph is still $\Omega(n^3)$.
Once again, the optimal constant measured in $n^3$ is still not clear.

\section{Sink-popping}

In this section, we describe and analyse the sink-popping algorithm by Cohn, Pemantle, and Propp \cite{CPP02}.
The goal is to sample a sink-free orientation in an undirected graph.
This problem was introduced by Bubley and Dyer \cite{BD97a}
as an early showcase of the power of path coupling for Markov chains \cite{BD97b}.
This problem was also reintroduced more recently to show lower bounds for distributed Lov\'asz local lemma algorithms \cite{BFHKLRSU16},
where the goal is to find, instead of to sample, a sink-free orientation.

The formulation is as follows.
In an undirected graph $G=(V,E)$, we associate a random variable $a_e$ to indicate the orientation for each edge $e\in E$,
and associate a bad event $B_v$ for each $v\in V$ to indicate that $v$ is a sink.
Then $\vbl(B_v)=\{a_{e}\mid\text{$v$ is an endpoint of $e$}\}$,
and $\abs{\vbl(B_v)}=d_v$ where $d_v$ is the degree of $v$.
\Cref{cond:extremal} is easy to verify --- if a vertex $v$ is a sink, then none of its neighbours can be a sink.
A description of the algorithm is given in \Cref{alg:sink-popping}.

\begin{algorithm}
  \caption{Sink-popping}
  \label{alg:sink-popping}
  Orient each edge independently and uniformly at random\;
  \While{there is at least one sink}{
    Re-orient all edges that are adjacent to sinks uniformly at random\;
  }
  \KwRet{the final orientation}
\end{algorithm}

As usual, let $\Omega_v$ be the set of orientations where $v\in V$ is the unique sink,
$\Omega_1=\bigcup_{v\in V}\Omega_v$, and $\Omega_0$ be the set of all sink-free orientations.
The set $\Omega_1^{\vbl}$ is also defined as usual.
The general strategy is once again to ``repair'' orientations in $\Omega_v$.
The first step is to associate a path to each $v'\in V$ such that it can be flipped without creating new sinks,
and $v'$ is guaranteed not a sink.
For each $(v,v')\in E$, we then flip $v\leftarrow v'$, and flip the path if $v'$ is a sink now.
However, there are a few cases where we cannot recover the original orientations
if we simply record $v'$ and the other endpoint of the path.
For example, if $v$ is along the path, then $v\leftarrow v'$ is flipped twice, and there is no hope to find out $v$.
There are ways to fix these ``special'' cases, and it is relatively straightforward to design the mapping if we are happy to hardcode each special case.
However, to achieve a tight bound, we will do a more complicated mapping so that the special cases can be detected given the image.

\newcommand{\depth}{\mathrm{depth}}

A simple observation is that a graph has a sink-free orientation if and only if no connected component of it is a tree.
Moreover, the expected running time of \Cref{alg:sink-popping} on a graph with more than one component
is simply the sum of the expected running time of each component.
We will assume that $G$ is connected and not a tree, and denote $n=\abs{V}$ and $m=\abs{E}$.
Since $G$ is not a tree, there must be a cycle $C$ in $G$.
Contract the cycle $C$, and pick an arbitrary spanning tree in the resulting graph.
Denote by $R\subseteq E$ this spanning tree combined with the cycle $C$.
Thus $\abs{R}=\abs{V}=n$ and $(V,R)$ is unicyclic,
namely $(V,R)$ is composed of $C$ attached with a number of trees.
Define $\depth(v)$ by the distance from $v$ to $C$ in $(V,R)$, and $\depth(v)=0$ if $v\in C$.

Fix an arbitrary ordering of all vertices and edges of $G$.
Let $\Gamma(v)=\{v'\mid (v,v')\in E\}$ be the neighbourhood of $v$ in $G$.
An equivalent way of writing $\Omega_1^{\vbl}$ is $\{(\sigma,v')\mid\sigma\in\Omega_v,\;v'\in\Gamma(v)\}$.
We say a vertex is \emph{good} (under an orientation $\sigma$) if it has at least two outgoing arcs in $(V,R)$.

\begin{lemma}  \label{lem:sink-degree}
  Let $\sigma\in\Omega_v$ be an orientation with the unique sink $v$.
  Restricted to the unicyclic subgraph $(V,R)$,
  there must be a good vertex $u$ such that $\depth(u)<\max\{1,\depth(v)\}$.
  In particular, if $\depth(u)=\depth(v)=0$, $u$ can be chosen so that it has two outgoing arcs in $C$.
\end{lemma}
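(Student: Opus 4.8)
The plan is to use the fact that $(V,R)$ is unicyclic with exactly $n$ vertices and $n$ edges, so the out-degrees of the vertices in $(V,R)$ under $\sigma$ sum to $n$; since the sink $v$ contributes $0$, there is an "excess" out-arc somewhere, and the real work is to locate one at a vertex of depth below $\max\{1,\depth(v)\}$. I would split on $\depth(v)$, and I would use the two mild structural facts that in the unicyclic graph $(V,R)$ every non-cycle vertex has a unique parent one level closer to $C$ (so the shortest path from any $w$ to $C$ is canonical), and that a vertex lying on $C$ has exactly two incident edges of $C$ (so "at least two outgoing arcs within $C$" forces "exactly two").

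First I would handle $\depth(v)=0$, which also yields the "in particular" clause. Here $v\in C$, and restricting $\sigma$ to the $|C|$ edges of the cycle, the cycle-out-degrees sum to $|C|$ while $v$ has cycle-out-degree $0$ (it is a sink in $G$, hence a sink on $C$). By pigeonhole some $u\in C$ has cycle-out-degree at least $2$, hence exactly $2$; this $u$ is good, has two outgoing arcs in $C$, and $\depth(u)=0<1=\max\{1,\depth(v)\}$.

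Next, the case $d:=\depth(v)\ge 1$. Let $v=w_d,w_{d-1},\dots,w_1,w_0$ be the unique path in $(V,R)$ from $v$ to $C$, with $w_{i-1}$ the parent of $w_i$ and $\depth(w_i)=i$; since $v$ is a sink, the edge $\{w_{d-1},w_d\}$ is oriented $w_{d-1}\to w_d$. If some edge $\{w_{i-1},w_i\}$ with $1\le i\le d-1$ is oriented "upward" as $w_i\to w_{i-1}$, I would take the largest such $i$: then $\{w_i,w_{i+1}\}$ is oriented "downward" as $w_i\to w_{i+1}$ (by maximality of $i$ when $i+1\le d-1$, and by the sink condition when $i+1=d$), so $w_i$ has the two distinct outgoing arcs $w_i\to w_{i-1}$ and $w_i\to w_{i+1}$ in $(V,R)$ and is good, with $\depth(w_i)=i\le d-1<d$. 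Otherwise every $\{w_{i-1},w_i\}$ with $1\le i\le d-1$ points downward, so $w_0\to w_1\to\cdots\to w_d$ is a directed path; in particular $w_0$ has an outgoing arc to $w_1\notin C$. If $w_0$ also has an outgoing arc on $C$, then $w_0$ is good with $\depth(w_0)=0<d$; and if not, $w_0$ is a sink on $C$, so the pigeonhole argument from the base case gives $u\in C$ with two outgoing arcs in $C$, again good with depth $0<d$. (When $d=1$ the "upward-edge" case is vacuous, so only the last alternative occurs, and it still goes through.)

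The only delicate point, and the part I expect to need the most care in writing, is the bookkeeping in the "upward-edge" alternative: verifying that the selected $w_i$ genuinely has two \emph{distinct} outgoing $(V,R)$-arcs (one to its parent and one to its child on the path) and that its depth is strictly below $\max\{1,\depth(v)\}$, including the boundary subcase $i+1=d$ where the child is $v$ itself. Everything else reduces to counting edges versus vertices on a cycle, respectively following the canonical $v$-to-$C$ path, so those steps should be routine.
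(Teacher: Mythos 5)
Your proof is correct, but it takes a more explicit route than the paper's. The paper prunes the unicyclic graph $(V,R)$ down to a subgraph $H_v$ consisting of the cycle $C$ together with the path from $C$ to $v$ (with $H_v=C$ when $v\in C$), observes that $H_v$ still has as many arcs as vertices while the sink $v$ contributes out-degree $0$, and concludes by a single pigeonhole that some vertex of $H_v$ has out-degree at least $2$; since every vertex of $H_v$ other than $v$ has depth strictly less than $\max\{1,\mathrm{depth}(v)\}$, that vertex is the desired $u$. You instead split on $\mathrm{depth}(v)$: for $\mathrm{depth}(v)=0$ you run the pigeonhole on $C$ alone (which also yields the ``in particular'' clause, exactly as the paper intends), and for $\mathrm{depth}(v)\ge 1$ you scan the $v$-to-$C$ path for the last ``upward'' arc, producing a good vertex on the path, and otherwise fall back to the cycle pigeonhole with $w_0$ playing the role of the sink on $C$. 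Both arguments rest on the same edges-equal-vertices count for a unicyclic graph; yours trades the paper's one-shot pigeonhole (and its slightly terse ``by construction $u$ satisfies the other requirements'') for an explicit verification of where $u$ lives and why its depth is small, at the cost of a few more cases. All the delicate points you flag --- distinctness of the two outgoing arcs at $w_i$, the boundary subcase $i+1=d$, and the vacuity of the upward-edge case when $d=1$ --- check out.
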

\begin{proof}
  First we prune all vertices below $v$ and other trees not containing $v$ attached to $C$.
  Call this remaining graph $H_v$, which is still unicyclic and has as many edges as its vertices.
  (If $v\in C$ then $H_v=C$.)
  Clearly $v$ in $H_v$ is still a sink under $\sigma$.
  Namely the out-degree of $v$ is $0$ in $H_v$.
  The total out-degree is same as the number of vertices in $H_v$.
  Thus, there must be a vertex $u$ with out-degree at least~$2$.
  By construction $u$ satisfies the other requirements as well.
\end{proof}

We note that $\depth(u)<\max\{1,\depth(v)\}$ implies that $\depth(u)<\depth(v)$ if $v\not\in C$,
and $\depth(u)=\depth(v)=0$ otherwise.
For $\sigma\in\Omega_v$ and $v'\in\Gamma(v)$,
choose a vertex $u$ as follows.
\begin{itemize}
  \item If $(v,v')\in R$, then let $u$ be the good vertex in \Cref{lem:sink-degree} that is closest to $v'$.
  \item If $(v,v')\not\in R$, then consider whether $v'$ is a sink in the pruned subgraph $H_{v'}$ of $(V,R)$ defined as above.
  \begin{itemize}
    \item If $v'$ is a sink in $H_{v'}$, then we apply \Cref{lem:sink-degree} to $H_{v'}$ and sink $v'$.
      Choose the closest good vertex to $v'$.
      Note that in this case, if $u$ is on $C$, then it must have at least one outgoing arc in $C$ and not along the path between $u$ and $v'$.
    \item If $v'$ is not, then we choose $u=v'$.
  \end{itemize}
\end{itemize}
All ties are broken according to the ordering chosen a priori.
Observe that if $u\neq v'$, then $\depth(u) < \max\{1,\depth(v')\}$.

We ``repair'' $\sigma$ by flipping a path between $u$ and $v'$ in $(V,R)$.
If $u$ and $v'$ are both on $C$, then we choose the one that does not contain $v$.
(If neither contains $v$, then we pick the shortest one.
Further ties are broken according to the ordering chosen a priori.)
Otherwise we simply choose the shortest path between $u$ and $v'$ in $(V,R)$.
(Again, ties are broken according to the ordering chosen a priori.)
Denote this path by $v_{\ell}=u,v_{\ell-1},\dots,v_1,v_0=v'$ where $\ell\ge 0$.
After flipping the path, we further flip the edge $(v,v')$ and denote the resulting orientation by $\sigmafix$.

We claim that $\sigmafix\in\Omega_0$.
If $\ell=0$, then $v'$ has at least two outgoing arcs and only $v\leftarrow v'$ is flipped. The claim holds.
Otherwise $\ell\ge 1$ and none of $u=v_{\ell},\dots,v_2$ can be a sink under $\sigmafix$.
For $u$, this is because it is good under $\sigma$, and only one of its adjacent edges is flipped.
For $v_{\ell-1},\dots,v_2$, they cannot have two outgoing arcs under $\sigma$ since $u$ is the closest good vertex to $v'$.
Thus after flipping, at least one of their adjacent edges along the path is still outgoing.
For $v_1$, $v_0$ and $v$, there are two cases:
\begin{enumerate}
  \item if $v_1\neq v$, then $v_1$ cannot be the sink either due to the same reasoning above.
    Moreover, since $u\neq v'$, it must be $v_1\rightarrow v'\rightarrow v$ under $\sigma$,
    and $v_1\leftarrow v'\leftarrow v$ under $\sigmafix$.
    Hence $v'$ and $v$ are not sink either;
  \item otherwise $v_1=v$.
    In this case $v\leftarrow v'$ is flipped twice.
    It must be that $\ell\ge 2$, and thus $v$ is not a sink as $v_2\leftarrow v=v_1$ under $\sigmafix$.
    No orientation of an adjacent edge of $v'$ is changed, so it is still not a sink.
\end{enumerate}
All other vertices are unchanged between $\sigma$ and $\sigmafix$.
Thus, $\sigmafix\in\Omega_0$.

Observe that $v\leftarrow v'$ is flipped twice only if $\depth(v')=\depth(v)+1$ and $(v,v')\in R$.
We say $(\sigma,v')$ is \emph{special} if $(v,v')\in R$ and,
\begin{enumerate}
  \item $\depth(v)=\depth(v')=\depth(u)=0$, or,
  \item $\depth(v')=\depth(v)+1$.
\end{enumerate}
Define the mapping $\varphi:\Omega_1^{\vbl}\rightarrow\Omega_0\times \aux$:
\begin{align}\label{eqn:sink-varphi}
  \varphi(\sigma,v')=
  \begin{cases}
    (\sigmafix,u,v') & \text{if $(\sigma,v')$ is special};\\
    (\sigmafix,v,u) & \text{otherwise},
  \end{cases}
\end{align}
where $\sigma\in\Omega_v$, $v'\in\Gamma(v)$, and $\aux$ is the set of all possible pairs of vertices appearing in the definition above.

We make a simple observation first.

\begin{lemma}  \label{lem:sink-u}
  Let $v,u,\sigmafix$ be given as above.
  If $\depth(v)=\depth(u)=0$,
  then $u$ must have at least one outgoing arc in $C$ under $\sigmafix$.
\end{lemma}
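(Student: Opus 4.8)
The plan is to unwind the three-case definition of $u$ and verify in each case that forming $\sigmafix$ destroys at most one of the outgoing cycle arcs that $u$ is guaranteed to have under $\sigma$, so that at least one survives. First I would record what the hypothesis buys us: $\depth(v)=\depth(u)=0$ forces $u,v\in C$, and in the case $(v,v')\in R$ the vertex $u$ is the good vertex supplied by \Cref{lem:sink-degree} for the sink $v$ at depth $0$, so by its ``In particular'' clause we may take $u$ to have \emph{two} outgoing arcs in $C$ under $\sigma$. Two structural observations drive every case: the flipped path has $u$ as one of its endpoints, hence at most one edge incident to $u$ lies on it; and the second flipped edge $(v,v')$ is an edge of $C$ only if it lies in $R$, and it is incident to $u$ only when $u=v'$.

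Then I would split into the three cases in the construction of $u$. If $(v,v')\in R$: $u$ has two outgoing cycle arcs under $\sigma$; the path flip meets $u$ in at most one edge, so it reverses at most one of them; if $u\ne v'$ then $(v,v')$ is not incident to $u$ (since $u\ne v$) and reverses no further cycle arc at $u$; if $u=v'$ the path is trivial and only $(v,v')=(v,u)$ is flipped, which under $\sigma$ is $u\to v$ because $v$ is a sink, so again exactly one of $u$'s two outgoing cycle arcs is reversed — either way one remains. If $(v,v')\notin R$ and $v'$ is not a sink in $H_{v'}$: then $u=v'$, so $\depth(v')=0$, $H_{v'}=C$, and $u$, not being a sink in $C$, has an outgoing cycle arc under $\sigma$; the only flipped edge is $(v,v')$, which lies outside $R$ and hence is not a cycle edge, so this arc is untouched. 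If $(v,v')\notin R$ and $v'$ is a sink in $H_{v'}$: then $u$ comes from \Cref{lem:sink-degree} applied to $H_{v'}$, and since $u$ lies on $C$ the note accompanying its definition supplies an outgoing cycle arc of $u$ under $\sigma$ not lying on the path between $u$ and $v'$; flipping that path leaves it alone, and flipping $(v,v')$ (outside $R$, hence not a cycle edge) leaves it alone too. Assembling the three cases gives the claim.

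I do not expect a genuine obstacle; the only point requiring care is the interaction of the two flips defining $\sigmafix$ with the cycle edges at $u$. Specifically, one must isolate the sub-case $u=v'$ with $(v,v')\in R$ — the unique situation in which $(v,v')$ is itself a cycle edge incident to $u$ and can therefore reverse one of $u$'s outgoing cycle arcs — and check that there $u$ still retains a second such arc by the ``In particular'' clause of \Cref{lem:sink-degree}; in all remaining sub-cases $(v,v')$ is irrelevant to the cycle edges at $u$, and only the path flip (removing at most one arc) can do any damage.
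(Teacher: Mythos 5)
Your proof is correct and follows essentially the same route as the paper's: case-split on how $u$ was chosen, use the guarantees from \Cref{lem:sink-degree} (two outgoing arcs in $C$, or one in $C$ off the flipped path) and the fact that the repair reverses at most one edge incident to $u$. You are merely more explicit than the paper about the $u=v'$ sub-cases and about why the flip of $(v,v')$ is harmless when $(v,v')\notin R$.
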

\begin{proof}
  By our choice of $u$, it has either two outgoing arcs in $C$ under $\sigma$ (when $u$ is chosen directly using \Cref{lem:sink-degree}),
  or one outgoing arc in $C$ and one pointing towards $v'$ (when $u$ is chosen by applying \Cref{lem:sink-degree} to $H_{v'}$).
  After repairing, $\sigmafix$ only flips one of the adjacent edge of $u$,
  leaving the other outgoing arc unchanged.
  In particular, in the latter case, the one flipped is not in~$C$.
\end{proof}

The main technical lemma is the following.

\begin{lemma}  \label{lem:sink-injective}
  The mapping $\varphi$ is $1$-preserving and injective.
\end{lemma}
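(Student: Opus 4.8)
The plan is to prove $1$-preservingness and injectivity separately, with injectivity being the substantial part. For $1$-preservingness, I would observe that the repairing procedure only ever flips orientations of edges (along the path in $(V,R)$ together with the single edge $(v,v')$), and since the orientation random variables $a_e$ are uniform, flipping an edge does not change the weight; hence $\wt(\sigma)=\wt(\sigmafix)$, which gives the required bound with $r=1$.

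For injectivity, the goal is to reconstruct the pair $(\sigma,v')$ from the image triple. I would split according to which branch of \eqref{eqn:sink-varphi} produced the image, and the first task is to argue that the two branches have disjoint images (or at least that one can tell which branch was used). The key structural handle is \Cref{lem:sink-u}: in the non-special branch the image records $(\sigmafix,v,u)$, and in the special branch $(\sigmafix,u,v')$; I would try to show that one can read off from $\sigmafix$ and the two recorded vertices whether we are in a special case, e.g.\ by examining depths and whether the recorded ``middle'' vertex is a sink-adjacent or good vertex under $\sigmafix$. Concretely, in the non-special case, flipping $(v,v')$ exactly once makes $v$ have one outgoing arc that came from this flip; in the special case ($v_1=v$, the edge flipped twice, or the depth-$0$ case) the situation is different, and \Cref{lem:sink-u} guarantees $u$ keeps an outgoing arc in $C$, which lets us recognise $u$'s role. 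Once the branch and the roles of the two recorded vertices (which of them is $u$, and which is $v$ or $v'$) are pinned down, I would recover the path $v_\ell=u,\dots,v_0=v'$: since $(V,R)$ is a fixed unicyclic graph and the path-selection rule (shortest path, with the cycle tie broken by avoiding $v$, further ties by the a priori ordering) is deterministic, knowing the two endpoints $u$ and $v'$ determines the path \emph{provided} we can also locate $v$. In the non-special branch we are handed $v$ directly, so the path between $u$ and $v'$ can be identified (we know $u$ and, once $v$ is known, $v'$ is the neighbour of $v$ whose flipped edge we detect — or we recover $v'$ as the other path endpoint). In the special branch we are handed $u$ and $v'$; here $v$ lies on the path (the depth-$+1$ case) or $v$ is a cycle neighbour, and I would argue $v$ is determined by the special-case structure. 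Having the path and $(v,v')$, we undo the flips: set $\sigma$ to agree with $\sigmafix$ off the path and off $(v,v')$, and reverse the orientations along the path and on $(v,v')$; this recovers $\sigma$, and $v'$ is then known, so $\varphi$ is injective.

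The main obstacle I expect is the careful case analysis separating special from non-special images and, within that, disambiguating the two recorded vertices — in particular showing that the deterministic tie-breaking rules really do make the path (and the choice of $u$ via \Cref{lem:sink-degree}) recoverable, and that the two branches of \eqref{eqn:sink-varphi} cannot collide. The depth bookkeeping (the inequalities $\depth(u)<\max\{1,\depth(v)\}$ and $\depth(v')=\depth(v)+1$ in the special case) is exactly what is designed to make the image ``self-describing'', so the proof will consist largely of checking that each recorded quantity, read against the fixed graph $(V,R)$ and the fixed ordering, uniquely pins down the next one. I would present this as: (i) from $\sigmafix$ and the recorded vertices decide the branch and identify $u$ and $\{v,v'\}$ with their labels; (ii) reconstruct the path in $(V,R)$; (iii) reverse the flips to obtain $\sigma$ and hence $v'$. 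I do not anticipate needing anything beyond the already-established facts that $\sigmafix\in\Omega_0$, \Cref{lem:sink-degree}, and \Cref{lem:sink-u}.
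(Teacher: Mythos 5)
Your overall strategy coincides with the paper's: $1$-preservingness is immediate because only orientations are flipped, and injectivity is proved by deciding which branch of \eqref{eqn:sink-varphi} produced the image, identifying the roles of the two recorded vertices, reconstructing the path in $(V,R)$, and undoing the flips. The gap is that the crux of the lemma --- the concrete, verifiable decision procedure for telling the special and non-special images apart and for labelling the two recorded vertices --- is left as a sequence of intentions (``I would try to show\dots'', ``I would argue $v$ is determined\dots'') rather than carried out. For this lemma that case analysis \emph{is} the proof; without it one has not excluded collisions between the two branches, and the collisions are not obviously impossible (e.g.\ the pair $(v,u)$ in the non-special branch and the pair $(u,v')$ in the special branch are both just pairs of vertices attached to the same sink-free orientation).

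For comparison, the paper's decision procedure is: (i) check whether $v_1$ is a sink of $\sigmafix$ \emph{restricted to the subgraph $(V,R)$} --- this holds iff we are non-special with $(v,v')\notin R$ (note $\sigmafix\in\Omega_0$ only guarantees sink-freeness in $G$, not in $(V,R)$, which is what makes this test meaningful); (ii) otherwise $(v,v')\in R$, and $v_1=v_2$ forces the special case since $v\neq u$ in the non-special branch; (iii) otherwise, since $\depth(u)\le\depth(v)$ and $\depth(u)\le\depth(v')$ in all remaining configurations, whichever of $v_1,v_2$ has strictly smaller depth must be $u$, and its position in the pair reveals the branch; (iv) the leftover case $\depth(v_1)=\depth(v_2)=0$ with $v_1\neq v_2$ is necessarily special. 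Then $v'$ is recovered in the non-special case as the head of the unique outgoing arc of $v$ under $\sigmafix$, and $v$ is recovered in the special cases as the unique vertex pointing into $v'$ on $C$ (using that the chosen cycle path avoids $v$, and \Cref{lem:sink-u}) or as the parent of $v'$ in $R$. Your sketch gestures at several of these ingredients, but until they are assembled into an exhaustive and mutually exclusive case split, the injectivity claim is unproven.
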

\begin{proof}
  Since there is no weight involved, $\varphi$ is $1$-preserving.
  To verify the injectivity,
  we just need to recover $(\sigma,v')$ from $(\sigmafix,v_1,v_2)$.
  We need to figure out whether $(\sigma,v')$ is special first.
  There are a few cases:
  \begin{itemize}
    \item First we check if $v_1$ is a sink under $\sigmafix$ in $(V,R)$.
      This happens if and only if $(v,v')\not\in R$, $v_1=v$, and $(\sigma,v')$ is not special.
    \item We may now assume $(v,v')\in R$ and $v_1$ is not a sink.
      If $v_1=v_2$, then it must be a special case.
      Otherwise no matter whether $\depth(v')=\depth(v)\pm1$ or $\depth(v')=\depth(v)=0$,
      $\depth(u)\le \depth(v)$ and $\depth(u)\le\depth(v')$.
      Thus, if $\depth(v_1)<\depth(v_2)$ or $\depth(v_1)>\depth(v_2)$,
      the one with smaller depth is $u$, and we can tell whether it is special or not.

      The remaining case is that $\depth(v_1)=\depth(v_2)=0$.
      This case must be special, since if not, $v_1=v$ and $\depth(v)=0$.
      Since $(v,v')\in R$, either $\depth(v')=1$ or $\depth(v')=0$.
      Both cases are special.
      Contradiction.
  \end{itemize}
  In summary, we can distinguish the special case and the other one.

  If $(\sigma,v')$ is not special,
  then $v$ has a unique outgoing arc under $\sigmafix$, which is $v\rightarrow v'$.
  We recover $v'$ and thus the path between $u$ and $v'$, and it is easy to figure out the rest.

  If $(\sigma,v')$ is special.
  We handle the two special cases differently:
  \begin{enumerate}
    \item if $\depth(v')=\depth(u)=0$, then the path between $v'$ and $u$ must goes from $v'$ to $u$ under $\sigmafix$ along $C$.
      Moreover, $v$ must be the unique vertex going toward $v'$ under $\sigmafix$,
      since we choose the path between $v'$ and $u$ not passing $v$.
    \item if $\depth(v')=\depth(v)+1$, then the path between $v'$ and $u$ is the shortest path,
      and $v$ must be the ancestor of $v'$ in $R$ (namely the first vertex on the path from $v'$ to $u$).
  \end{enumerate}
  In both cases, we can recover the path between $v'$ and $u$ as well as the original sink $v$.
  Given these information, recovering $\sigma$ is straightforward.
\end{proof}

\Cref{lem:sink-injective} directly gives an $n^2$ upper bound for $\frac{\abs{\Omega_1^{\vbl}}}{\abs{\Omega_0}}$.
We can improve it to $n(n-1)$.
For any fixed $\sigmafix$, we want to bound the number of pairs of vertices that can be the possible output of $\varphi$ along with $\sigmafix$.
If $v$ is fixed in the non-special case of \eqref{eqn:sink-varphi}, then $u\neq v$.
In the special case, if $u$ is fixed, then in the first special case,
$v'$ cannot be the vertex $u$ points to along $C$ under $\sigmafix$ (recall \Cref{lem:sink-u});
whereas in the second special case, $v'$ cannot be $u$.
Thus \Cref{lem:sink-injective} implies that $\frac{\abs{\Omega_1^{\vbl}}}{\abs{\Omega_0}}\le n(n-1)$.
Then \eqref{eqn:expected} yields the following theorem.

\begin{theorem}  \label{thm:sink-time}
  Let $G=(V,E)$ be a connected graph that is not a tree.
  The expected number of random variables drawn in \Cref{alg:sink-popping} on $G$ to sample a sink-free orientation
  is at most $m+n(n-1)$,
  where $n=\abs{V}$ and $m=\abs{E}$.
\end{theorem}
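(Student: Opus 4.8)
The plan is to combine the refined expectation formula \eqref{eqn:expected} with the injective mapping machinery established in \Cref{lem:sink-injective}, exactly mirroring the structure of the cluster- and cycle-popping proofs. First I would recall that by \eqref{eqn:expected} the expected number of \emph{resampled} variables equals $\sum_{v\in V}\frac{q_v\cdot\abs{\vbl(B_v)}}{q_\emptyset}=\sum_{v\in V}\frac{q_v\, d_v}{q_\emptyset}$, where $q_v$ is the probability that $v$ is the unique sink under the product (uniform orientation) distribution and $q_\emptyset$ the probability of a sink-free orientation. Since every bad event $B_v$ uses variables of uniform range (each edge has two orientations), all weights $\wt(\sigma)$ are equal, so $q_v\propto\abs{\Omega_v}$ and $q_\emptyset\propto\abs{\Omega_0}$, and the sum becomes $\frac{1}{\abs{\Omega_0}}\sum_v\abs{\Omega_v}\,d_v = \frac{\abs{\Omega_1^{\vbl}}}{\abs{\Omega_0}}$, using that $\Omega_1^{\vbl}=\{(\sigma,v')\mid \sigma\in\Omega_v,\,v'\in\Gamma(v)\}$ partitions according to the sink and has $\sum_v\abs{\Omega_v}\,d_v$ elements.

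Next I would invoke \Cref{lem:sink-injective}: the mapping $\varphi:\Omega_1^{\vbl}\to\Omega_0\times\aux$ is injective and $1$-preserving, so $\abs{\Omega_1^{\vbl}}\le\abs{\Omega_0}\cdot\abs{\aux}$, giving the crude bound $\frac{\abs{\Omega_1^{\vbl}}}{\abs{\Omega_0}}\le n^2$. The improvement to $n(n-1)$ comes from the counting argument already sketched in the paragraph preceding the theorem: fixing the first coordinate $\sigmafix\in\Omega_0$, the number of admissible second coordinates (pairs of vertices) in the image of $\varphi$ is at most $n(n-1)$ rather than $n^2$. I would spell this out by the case split from \eqref{eqn:sink-varphi}: in the non-special case the image is $(\sigmafix,v,u)$ with $u\neq v$ (since $u$ is a good vertex distinct from the sink $v$), contributing at most $n(n-1)$ pairs; in the first special case the image is $(\sigmafix,u,v')$ where, by \Cref{lem:sink-u}, $u$ has an outgoing arc in $C$ under $\sigmafix$, so $v'$ cannot be the unique $C$-neighbour that $u$ points to, again at most $n(n-1)$; in the second special case $v'\neq u$, likewise at most $n(n-1)$. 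Since $\varphi$ is injective and every image lies in one of these three regimes, the total number of pairs over all $\sigmafix$ is at most $n(n-1)\abs{\Omega_0}$, hence $\frac{\abs{\Omega_1^{\vbl}}}{\abs{\Omega_0}}\le n(n-1)$.

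Finally I would assemble: plugging into \eqref{eqn:expected} yields that the expected number of resampled variables is at most $n(n-1)$, and adding the $m$ variables drawn in the initial random orientation gives the claimed bound $m+n(n-1)$ on the total number of random variables drawn. I would also note the reduction to the connected, non-tree case is already justified before \Cref{lem:sink-degree} (a graph has a sink-free orientation iff no component is a tree, and the expected running time is additive over components), so the statement's hypothesis is exactly what the argument needs.

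The one genuinely delicate point—already discharged by \Cref{lem:sink-injective}—is that the three image regimes above are \emph{mutually distinguishable from the image alone}, so that the per-$\sigmafix$ count does not double-count across regimes; this is why the ``special'' predicate was engineered to be recoverable from $(\sigmafix,v_1,v_2)$. For the theorem itself no further obstacle remains: the only work is the bookkeeping translating \eqref{eqn:expected} into $\abs{\Omega_1^{\vbl}}/\abs{\Omega_0}$ and the three-line counting of admissible vertex pairs, so I would keep the proof short and lean entirely on \Cref{lem:sink-injective}, \Cref{lem:sink-u}, and \eqref{eqn:expected}.
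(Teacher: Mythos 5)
Your proposal is correct and follows essentially the same route as the paper: it reduces the expected number of resampled variables via \eqref{eqn:expected} to the ratio $\abs{\Omega_1^{\vbl}}/\abs{\Omega_0}$ (using that all orientations are equiprobable), bounds that ratio by $n(n-1)$ through the per-$\sigma_{\fix}$ count of admissible vertex pairs in the image of $\varphi$ --- with exactly the three exclusions ($u\neq v$ in the non-special case, $v'$ not the $C$-successor of $u$ in the first special case via \Cref{lem:sink-u}, and $v'\neq u$ in the second special case) that the paper uses --- and adds $m$ for the initial orientation. No substantive difference from the paper's argument.
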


It is easy to see that \Cref{thm:sink-time} is tight,
since on a cycle of length $n$ the upper bound is achieved.

\section{Concluding remarks}

Perhaps the most interesting open problem is whether there are faster algorithms to sample root-connected subgraphs or sink-free orientations.
For spanning trees, Kelner and M\k{a}dry \cite{KM09} has shown that the random walk based approach can be accelerated to $O(m\sqrt{n})$,
which has kindled a sequence of improvements \cite{MST15,DKPRS17,DPPR17}.
This line of research culminates in the almost-linear time algorithm by Schild \cite{Sch18}.
It would be interesting to see whether these (such as graph sparsification and fast linear system solver) or other ideas
can be used to accelerate cluster-popping and sink-popping as well.

\section*{Acknowledgements}

Part of the work was done while HG was visiting the Institute of Theoretical Computer Science, Shanghai University of Finance and Economics,
and he would like to thank their hospitality.
HG would also like to thank Mark Jerrum for helpful discussion.
KH would like to thank Xiaoming Sun for helpful discussion.

\bibliographystyle{alpha}
\bibliography{PRS}

\end{document}